\newtheorem{theorem}{Theorem}[section]
\newtheorem{proposition}[theorem]{Proposition}
\newtheorem{lemma}[theorem]{Lemma}
\newtheorem{corollary}[theorem]{Corollary}
\newtheorem{conjecture}[theorem]{Conjecture}
\theoremstyle{definition}
\newtheorem{definition}[theorem]{Definition}
\theoremstyle{remark}
\newtheorem*{remark}{Remark}
\newcommand{\colonequiv}{\vcentcolon\equiv}
\newcommand{\qqquad}{\qquad\qquad}
\newcommand{\turnstile}{\vdash}
\newcommand{\context}{\turnstile}
\newcommand{\red}{\triangleright}
\newcommand{\ctx}{\mathrm{ctx}}
\DeclareMathOperator{\id}{id}
\DeclareMathOperator{\ind}{ind}
\DeclareMathOperator{\refl}{refl}
\DeclareMathOperator{\isequiv}{isequiv}
\DeclareMathOperator{\idtoeqv}{idtoeqv}
\DeclareMathOperator{\transport}{transport}
\DeclareMathOperator{\univalence}{univalence}
\newcommand{\fst}{\mathrm{fst}}
\newcommand{\code}[1]{\textnormal{\texttt{\frenchspacing#1}}}
\newcommand{\sups}[1]{\textsuperscript{#1}}
\newcommand{\sub}[1]{\textsubscript{#1}}
\newcommand{\subsup}[2]{\rlap{\sub{#1}}\sups{#2}}
\newcommand{\ml}[1]{\underline{\textit{#1}}}
\newcommand{\I}{\textsc{i}}
\newcommand{\E}{\textsc{e}}
\newcommand{\R}{\textsc{r}}
\newcommand{\s}{\textsc{s}}
\newcommand{\T}{\textsc{t}}
\newcommand{\conv}{\textsc{conv}}
\newcommand{\emp}{\textsc{emp}}
\newcommand{\ext}{\textsc{ext}}
\newcommand{\vbl}{\textsc{vbl}}
\newcommand{\form}{\textsc{form}}
\newcommand{\intro}{\textsc{intro}}
\newcommand{\elim}{\textsc{elim}}
\newcommand{\cng}{\textsc{cong}}
\newcommand{\ua}{\textsc{ua}}
\newcommand{\res}{\textsc{res}}
\newenvironment{pure}{%
  \begin{center}
  \varwidth{\linewidth}%
  \begin{alltt} \frenchspacing
}{%
  \end{alltt}
  \endvarwidth
  \end{center}
}
\newcommand{\Implies}{\(\Longrightarrow\)}
\newcommand{\Equiv}{\(\equiv\)}
\newcommand{\All}{\(\bigwedge\)}
\newcommand{\Dblcolon}{\(\dblcolon\)}
\newcommand{\Colon}{\(\colon\)}
\newcommand{\Map}{\(\Rightarrow\)}
\newcommand{\llb}{\(\llbracket\)}
\newcommand{\rrb}{\(\rrbracket\)}
\newcommand{\Dotsb}{\(\dotsb\)}
\newcommand{\Dotsc}{\(\dotsc\)}
\newcommand{\Prod}[2]{\(\prod\){#1}. #2}
\newcommand{\Sum}[2]{\(\sum\){#1}. #2}
\newcommand{\Lt}{\(<\)}
\newcommand{\Leq}{\(\leq\)}
\newcommand{\blm}{\(\bm{\lambda}\)}
\newcommand{\rarrow}{\(\rightarrow\)}
\newcommand{\NN}{\(\mathbb{N}\)}
\newcommand{\zero}{\(\bm{0}\)}
\newcommand{\one}{\(\bm{1}\)}
\author{}
\date{}
\title{
  \textbf{An Implementation of\\Homotopy Type Theory\\in Isabelle/Pure} \\[0.5em]
  { \Large \textsc{Joshua Chen} } \\
  { \normalsize 12th September 2018 } \\[5em]
  { \normalsize A thesis for the degree of Masters in Mathematics } \\
  { \normalsize Advisor: Prof. Dr. Peter Koepke } \\
  { \normalsize Second Examiner: P.D. Dr. Philipp L\"{u}cke } \\
  { \normalsize \textsc{Mathematisches Institut} } \\[6em]
  { \normalsize \textsc{Mathematisch-Naturwissenschaftliche Fakult\"{a}t der \\
  Rheinischen Friedrich-Wilhelms-Universit\"{a}t Bonn}}
}
\begin{document}

\begin{titlepage}
\maketitle
\end{titlepage}

\newpage
\setcounter{tocdepth}{1}
\pdfbookmark[chapter]{\contentsname}{toc}
\tableofcontents

\chapter*{Introduction}
\addcontentsline{toc}{chapter}{Introduction}

This thesis presents an implementation of an initial segment of a homotopy type theory in the proof assistant Isabelle.

To elaborate, a \emph{homotopy type theory} is a type theory typically based on some version of Martin-L\"{o}f's intuitionistic type theory \cite{ml73, ml82, bibliopolis} with universes, distinguished crucially by its use of the intensional equality type and the univalence axiom, as well as by the presence of higher inductive types.
Isabelle \cite{isabelle2018} is an interactive computer proof assistant, which allows for the formalization of various object logics in a natural deduction style within a ``meta-logical framework'' called Isabelle/Pure, or simply \emph{Pure}.
The logic of Pure is itself a fragment of simple type theory \cite{church-simple-type-theory}, with the primitive logical connectives \Implies\ for entailment/implication and \Equiv\ for equality, and universal quantification \All\ over meta-level typed variables.
Finally, the goal and major contribution of this thesis is to formalize within the Pure logic a version of homotopy type theory that corresponds suitably to that developed in the Homotopy Type Theory book \cite{hottbook}, and also to implement semi-automated routines for constructing and verifying proofs within this theory.
We implement an ``initial segment'' of the theory consisting of the universe hierarchy, the basic types, and the univalence axiom.
$W$-types and higher inductive types are left for future work.

In formalizing one formal system $\mathcal{L}$ (the \emph{object logic}) within another $\mathcal{F}$ (the \emph{metalogic/logical framework}), there are often significant differences in the structure and expressive ability\footnote{By which is meant the \emph{convenience} with which one can express certain logical constructions, as opposed to the \emph{proof-theoretic strength} of the two logics; it may be the case that both $\mathcal{L}$ and $\mathcal{F}$ can prove the same statements, but one in a much simpler and clearer way than the other.
An analogy is the contrast between Conway's Game of Life and the C++ programming language; both are Turing complete, but there is a reason software is not typically written using cellular automata.}
of the object- and meta- logics, which creates issues when one comes to implement one inside the other.
For example, the standard formalism of the type theory presented in \cite{hottbook} requires the explicit derivation of contexts to be used in hypothetical typing judgments, but there is no simple way of handling object-level contexts in Pure.
This results in the impossibility of directly translating the type rules in a way which preserves the formal logical properties discussed above, and we are forced to make design decisions as to good alternative formulations, which must still deviate in some way from the original intended logic.
Much of the work of this thesis consists in explaining and justifying these design decisions, and showing that the resulting modified implementation still corresponds well to the original theory.

The guiding goal of the implementation described in this thesis has been to serve as a formalization of some of the content of \cite{hottbook}, and an aid in exploring the theory described therein.
It is however also hoped that this initial attempt at bringing homotopy type theory to the Isabelle prover will lead to the development of a more powerful and robust library for the logic, perhaps one that will ultimately be comparable to those of systems like Coq and Agda.

\chapter*{Notation and conventions}
\addcontentsline{toc}{chapter}{Notation and conventions}

The following notation and conventions will be introduced throughout the course of this thesis.
We gather them here for ease of reference.

\section{Orthography}

\begin{itemize}
\item Theory names are written in italics, e.g. $ITT$, $HoTT$.

\item Names of axioms and inference rules are given in caps, e.g. $\ua$ or UA for the univalence axiom.

\item Names of entities of the Isabelle/Pure framework are written in monospace font, e.g. \code{prop} for the type of Pure formulas, or \code{Prod\_intro} for the Pure implementation of the dependent function introduction rule.

\item Names of ML entities are denoted with underlined italic text, e.g. \ml{thm} for the ML type of Pure theorems, \ml{Var} for the ML term constructor of Pure schematic variables.
\end{itemize}

\section{Notation}

\begin{itemize}[label=]
\item $\mathcal{P}_f(A)$ denotes the finite powerset of $A$.
\item $\mathcal{D}_\mathcal{N}$ denotes the deductive core of a natural deduction system $\mathcal{N}$.
\item \code{t}$[\sigma]$ is the instantiation of the term \code{t} by the higher-order unifier $\sigma$.
\end{itemize}

\section{Operator precedences}

In descending order; entries in the same item have equal precedence and must be disambiguated with parentheses:

\begin{enumerate}
\item $\centerdot$ (path composition) % 120
\item $\circ$ (function composition) % 110
\item $=$ (equality), $\sim$ (homotopy), $\simeq$ (equivalence) % 100
\item $\times$ (nondependent sum) % 50
\item $\rightarrow$ (nondependent product) %40
\item $\equiv$ (definitional equality) % 2
\end{enumerate}

Juxtaposition of terms has higher precedence than definitional equality, so
\begin{pure}
f x \Equiv g x
\end{pure}
means
\begin{pure}
(f x) \Equiv (g x)\textrm{.}
\end{pure}
Binders, e.g. $\prod$, $\sum$, $\lambda$, are greedy and continue to the end of their innermost enclosing scope.
Arrows \Implies, $\rightarrow$ and function composition $\circ$ associate to the right.
The dependent sum constructor $\times$ and path composition $\centerdot$ associate to the left.

\chapter{Homotopy type theory} \label{ch:1}

% Start page numbering of primary content
\setcounter{page}{1}
\pagenumbering{arabic}

We begin with a brief discussion of the homotopy type theory $HoTT$ we aim to implement, and then provide a formal description in Sections \ref{sec:ITT} and \ref{sec:ua}.
$HoTT$ is a fragment of the type theory considered in the Homotopy Type Theory (HoTT) book \cite{hottbook}, consisting of the dependent product, dependent sum, empty, unit, coproduct, natural number and equality types, together with a Russell-style universe hierarchy and the univalence axiom.

This chapter elaborates on the presentation given in Appendix A.2 of \cite{hottbook}.
Although based on Per Martin-L\"{o}f's intuitionistic type theories \cite{ml73, ml82, bibliopolis}, an additional judgment is introduced for the well-formedness of contexts, which were handled more implicitly in the original development.

It should be noted that there is not one single ``homotopy type theory''---we have chosen a particular system, but there are other ways to formalize and extend the homotopy interpretation of intensional type theory.\footnote{Cubical type theory \cite{cubicaltt} is a particularly promising approach in which univalence is a theorem, as opposed to our theory, in which it has to be taken as an axiom.}

\section{Terms, types, and contexts} \label{sec:terms-types-contexts}

As usual for a type theory, we have two syntactic classes of expressions: \emph{terms} and \emph{types}.

Types are introduced via \emph{formation rules}.
The terms are generated from countably many variable and constant symbols, augmented with additional constructors associated with particular types via \emph{introduction rules}.

Terms $a$ may be assigned types $A$, and as usual we write $a \colon A$ to express that $a$ is assigned (``has'') type $A$.

\begin{definition}
A \textbf{context} is a list of variable symbol typing assignments
\[ x_1 \colon A_1, \dotsc, x_n \colon A_n \]
for some $n \geq 0$ (i.e. it is allowed to be empty).
\end{definition}

Contexts are used as ``type assumptions'' in the judgments of the theory.
Each type $A_i$ appearing in the context is allowed to depend on the variables $x_1, \dotsc, x_{i-1}$ that appear before it, but not on any variables appearing after.

\section{Judgments} \label{sec:judgments}

Judgments are the ``propositions'' of our logical theory---they are the statements which the logic is allowed to prove.

There are three judgment forms
\[ \Gamma\ \ctx, \qquad
\Gamma \context a \colon A, \qquad
\Gamma \context a \equiv b \colon A, \]
where $\Gamma$ is a context, $a$ and $b$ are terms, and $A$ is a type.
They express, respectively, that $\Gamma$ is a well-formed context, that term $a$ has type $A$ under the type assumptions in $\Gamma$, and that $a$ and $b$ are definitionally equal terms of type $A$ under the type assumptions in $\Gamma$.

Definitional equality can be seen as a meta-level notion of ``symmetric reduction''.
It subsumes $\beta$- and $\eta$-conversion but not $\alpha$-conversion, which is handled on the meta-level.\footnote{To be more explicit, standard formulations of type theory typically take the changing of bound variable names in a term expression as a meta-level operation preserving the syntactic identity of the term (subject to side conditions forbidding variable capture).
That is, technically speaking, the class of terms is considered modulo the equivalence relation generated by changing bound variable symbols uniformly throughout a term, subject to no-variable-capture.
This meta-level syntactic identity may be applied at any time, and is not handled by any of the rules governing definitional equality.}

\begin{remark}
Martin-L\"{o}f also considered the judgment forms
\[ A\ \mathrm{type}, \qquad
A \equiv B\ \mathrm{type}, \]
judging $A$ to be a type and $A$, $B$ to be definitionally equal types.
We do not need these in our theory due to our use of universe types.
\end{remark}

\section{Universes} \label{sec:universes}

Intuitively, a universe type is a type whose inhabitants are themselves types.
There are two ways to interpret this statement.

The first is to consider the inhabitants $A$ of a universe type $U$ to be \emph{names} for actual types, and to introduce an interpretation operator $\mathrm{El}$ taking every name $A \colon U$ to a type $\mathrm{El}(A)$.
This is the \emph{Tarski-style} formulation.
It does not conflate the syntactic categories of terms and types and has nice meta-theoretic properties, but is more cumbersome to work with, especially when we come to introduce a hierarchy of universes.

The second approach and the one we take is the \emph{Russell-style} formulation.
Here a term $A \colon U$ is also an actual type that may be used in typing assignments, so that there may be terms $a \colon A$.
Note that in this formulation, the types are a proper subset of the terms.

In $HoTT$ we consider not just one universe type, but introduce a countable hierarchy
\[ U_0, U_1, \dotsc \]
such that each term $U_i$ is an inhabitant of the next universe, i.e. we assign
\[ U_i \colon U_{i+1} \]
for all natural numbers $i$.
Note that the indices of the universes are \emph{numerals} i.e. meta-level natural numbers.
In addition, we require the hierarchy to be \emph{cumulative}: if $A \colon U_i$ for some $i$, then also $A \colon U_j$ for all $j > i$.

\begin{remark}
It is known that while Russell-style universes seem more straightforward to use in practice, they can suffer from meta-theoretic issues, particularly in relation to canonicity and subject reduction \cite{luo12}.
We discuss these issues further in Section \ref{sec:metatheory}.
\end{remark}

\section{Rules for $ITT$} \label{sec:ITT}

We first formally state the axioms and inference rules for $ITT$ (intuitionistic type theory), the fragment of $HoTT$ without the univalence rule.
These hold for all numerals $i$.

\subsection{Structural rules}

\noindent \emph{Contexts:}

\[
\begin{prooftree}[center=false] \label{rule:ctx-emp-ext}
  \infer0[ctx-\emp]{ \cdot\ \ctx }
\end{prooftree}
\qqquad
\begin{prooftree}[center=false]
  \hypo{ \Gamma \context A \colon U_i }
  \infer1[ctx-\ext]{ \Gamma, x \colon A\ \ctx }
\end{prooftree}
\]
where $x$ is a variable symbol that does not appear in $\Gamma$.

\[
\begin{prooftree} \label{rule:vbl}
  \hypo{ x_1 \colon A_1, \dotsc, x_n \colon A_n\ \ctx }
  \infer1[\vbl]{ x_1 \colon A_1, \dotsc, x_n \colon A_n \context x_j \colon A_j }
\end{prooftree}
\]
for any $1 \leq j \leq n$.
As discussed in Section \ref{sec:terms-types-contexts}, a type $A_i$ appearing in a well-formed context may depend on the variables $x_j$ for $1 \leq j < i$, and never on the variables $x_j$ for $j \geq i$.
\bigskip

\noindent \emph{Substitution and weakening:}

\[
\begin{prooftree}
  \hypo{\Gamma \context a \colon A}
  \hypo{\Gamma, x \colon A, \Delta \context b \colon B}
  \infer2[\textsc{subst}$_1$]{\Gamma, \Delta[a/x] \context b[a/x] \colon B[a/x]}
\end{prooftree}
\]

\[
\begin{prooftree}
  \hypo{\Gamma \context a \colon A}
  \hypo{\Gamma, x \colon A, \Delta \context b \equiv c \colon B}
  \infer2[\textsc{subst}$_2$]{\Gamma, \Delta[a/x] \context b[a/x] \equiv c[a/x] \colon B[a/x]}
\end{prooftree}
\]

\[
\begin{prooftree}[center=false]
  \hypo{\Gamma \context A \colon U_i}
  \hypo{\Gamma, \Delta \context b \colon B}
  \infer2[\textsc{wkg}$_1$]{\Gamma, x \colon A, \Delta \context b \colon B}
\end{prooftree}
\qqquad
\begin{prooftree}[center=false]
  \hypo{\Gamma \context A \colon U_i}
  \hypo{\Gamma, \Delta \context b \equiv c \colon B}
  \infer2[\textsc{wkg}$_2$]{\Gamma, x \colon A, \Delta \context b \equiv c \colon B}
\end{prooftree}
\]
\bigskip

\noindent \emph{Equivalence and structural congruence rules:}

\[
\begin{prooftree}[center=false]
  \hypo{\Gamma \context a \colon A}
  \infer1[$\equiv$-\textsc{refl}]{\Gamma \context a \equiv a \colon A}
\end{prooftree}
\qqquad
\begin{prooftree}[center=false] \label{rule:equiv-sym}
  \hypo{\Gamma \context a \equiv b \colon A}
  \infer1[$\equiv$-\textsc{sym}]{\Gamma \context b \equiv a \colon A}
\end{prooftree}
\]

\[
\begin{prooftree}
  \hypo{\Gamma \context a \equiv b \colon A}
  \hypo{\Gamma \context b \equiv c \colon A}
  \infer2[$\equiv$-\textsc{trans}]{\Gamma \context a \equiv c \colon A}
\end{prooftree}
\]

\[
\begin{prooftree}[center=false]
  \hypo{\Gamma \context a \colon A}
  \hypo{\Gamma \context A \equiv B \colon U_i}
  \infer2[\textsc{type-cong}]{\Gamma \context a \colon B}
\end{prooftree}
\qqquad
\begin{prooftree}[center=false]
  \hypo{\Gamma \context a \equiv b \colon A}
  \hypo{\Gamma \context A \equiv B \colon U_i}
  \infer2[\textsc{def-cong}]{\Gamma \context a \equiv b \colon B}
\end{prooftree}
\]

\subsection{Universes}

\[
\begin{prooftree}[center=false]
  \hypo{\Gamma\ \ctx}
  \infer1[$U_i$-\intro]{\Gamma \context U_i \colon U_{i+1}}
\end{prooftree}
\qqquad
\begin{prooftree}[center=false]
  \hypo{\Gamma \context A \colon U_i}
  \infer1[$U_i$-\textsc{cumul}]{\Gamma \context A \colon U_{i+1}}
\end{prooftree}
\]

\subsection{Type rules}

These are the formation, introduction, elimination, computation and uniqueness rules as given in Sections A.2.4--A.2.10 of \cite{hottbook}.

\subsection{Term congruence rules}

The type rules introduce constants which are built from term constructors taking some number of arguments.
A congruence rule expresses the principle that a term constructor applied to definitionally equal arguments yields definitionally equal terms.
We state the rules for dependent sum and product here; the rules for the other types are analogous.

Dependent product:

\[
\begin{prooftree}
  \hypo{\Gamma \context A \equiv A^\prime \colon U_i}
  \hypo{\Gamma, x \colon A \context B \equiv B^\prime \colon U_i}
  \infer2[$\Pi$-\form-\cng]{\Gamma \context \prod_{x \colon A} B \equiv \prod_{x \colon A^\prime} B^\prime \colon U_i}
\end{prooftree}
\]

\[
\begin{prooftree}
  \hypo{\Gamma \context A \equiv A^\prime \colon U_i}
  \hypo{\Gamma, x \colon A \context b \equiv b^\prime \colon B}
  \infer2[$\Pi$-\intro-\cng]{\lambda(x \colon A). b \equiv \lambda(x \colon A^\prime). b^\prime \colon \prod_{x \colon A} B}
\end{prooftree}
\]

\[
\begin{prooftree}
  \hypo{\Gamma \context f \equiv f^\prime \colon \prod_{x \colon A} B}
  \hypo{\Gamma \context a \equiv a^\prime \colon A}
  \infer2[$\Pi$-\elim-\cng]{\Gamma \context f(a) \equiv f^\prime(a^\prime) \colon B[a/x]}
\end{prooftree}
\]
\smallskip

Dependent sum:

\[
\begin{prooftree}
  \hypo{\Gamma \context A \equiv A^\prime \colon U_i}
  \hypo{\Gamma, x \colon A \context B \equiv B^\prime \colon U_i}
  \infer2[$\Sigma$-\form-\cng]{\Gamma \context \sum_{x \colon A} B \equiv \sum_{x \colon A^\prime} B^\prime \colon U_i}
\end{prooftree}
\]

\[
\begin{prooftree}
  \hypo{\Gamma, x \colon A \context B \colon U_i}
  \hypo{\Gamma \context a \equiv a^\prime \colon U_i}
  \hypo{\Gamma \context b \equiv b^\prime \colon B[a/x]}
  \infer3[$\Sigma$-\intro-\cng]{\Gamma \context (a,b) \equiv (a^\prime, b^\prime) \colon \sum_{x \colon A} B}
\end{prooftree}
\]

\[
\begin{prooftree}
  \hypo{
  \begin{gathered} \textstyle
    \Gamma, z \colon \sum_{x \colon A} B \context C \equiv C^\prime \colon U_i
    \qquad
    \Gamma, x \colon A, y \colon B \context g \equiv g^\prime \colon C[(x,y)/z] \\ \textstyle
    \Gamma \context p \equiv p^\prime \colon \sum_{x \colon A} B
  \end{gathered}
  }
  \infer1[$\Sigma$-\elim-\cng]{\Gamma \context \ind_{\sum_{x \colon A} B}(z.C, x.y.g, p) \equiv \ind_{\sum_{x \colon A} B}(z.C^\prime, x.y.g^\prime, p^\prime) \colon C[p/z]}
\end{prooftree}
\]
\smallskip

\section{Univalence and $HoTT$} \label{sec:HoTT}

Thus far the theory has simply been that of intuitionistic Martin-L\"{o}f type theory.
Here we introduce the final ingredient of $HoTT$---the univalence axiom.

Univalence is the first major homotopy-type-theoretic idea, so we pause to reiterate the foundational concepts and results.
We assume familiarity with the basic terminology and definitions of \cite{hottbook}, particularly concerning type families, the equality type, and path induction.

\begin{definition}
Let $f, g \colon \prod_{x \colon A} B$.
Let
\[ f \sim g \colonequiv \prod_{x \colon A} f(x) =_{B(x)} g(x) \]
be the type of \textbf{homotopies between $f$ and $g$}.
\end{definition}

\begin{definition} \label{def:isequiv}
Let $f \colon A \rightarrow B$.
Define
\[
\isequiv(f) \colonequiv \Bigg( \sum_{g \colon B \rightarrow A} g \circ f \sim \id_A \Bigg) \times \Bigg( \sum_{g \colon B \rightarrow A} f \circ g \sim \id_B \Bigg)
\]
where $g \circ f \colonequiv \lambda(x \colon A).\,g(f(x))$ and $\id_A \colonequiv \lambda(x \colon A).\,x$.
If $\isequiv(f)$ is inhabited we call $f$ an \textbf{equivalence}.
\end{definition}

\begin{definition}
Let $A, B \colon U_i$.
Let
\[ A \simeq B \colonequiv \sum_{f \colon A \rightarrow B} \isequiv(f) \]
be the type of \textbf{equivalences} from $A$ to $B$.
\end{definition}

That is, a homotopy between dependent functions $f \sim g$ expresses \emph{pointwise equality} of $f$ and $g$; a function $f$ is an equivalence if it has a pointwise left inverse and a pointwise right inverse, not necessarily identical; and two types $A$ and $B$ are equivalent if there is an equivalence $f \colon A \rightarrow B$.

Equivalence of types is meant to express the notion of \emph{isomorphism} between mathematical structures, encoded as types.
With this in mind, the following result is fitting:

\begin{theorem}[Equal types are equivalent] \label{thm:idtoeqv}
Let $A, B \colon U_i$.
There is a function
\[ \idtoeqv_{A,B} \colon A =_{U_i} B \rightarrow A \simeq B. \]
\end{theorem}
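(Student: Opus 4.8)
The plan is to define $\idtoeqv_{A,B}$ by \emph{path induction} on the identity type $A =_{U_i} B$, reducing the construction to the single case of the reflexivity path, where the desired equivalence is simply the identity map on $A$. Concretely, I would take the ambient type of the identity elimination rule to be the universe $U_i$ itself and introduce the motive
\[ C \colonequiv \lambda(A \colon U_i).\, \lambda(B \colon U_i).\, \lambda(p \colon A =_{U_i} B).\, (A \simeq B), \]
which is independent of the path argument $p$. Path induction then reduces the construction of a term of $\prod_{A, B \colon U_i} \prod_{p \colon A =_{U_i} B} C(A, B, p)$ to the construction of a term $c \colon \prod_{A \colon U_i} C(A, A, \refl_A)$, that is, to exhibiting for each $A \colon U_i$ a self-equivalence of $A$.

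Second I would supply this base case using the identity function. For fixed $A$, I take the underlying map to be $\id_A \colonequiv \lambda(x \colon A).\, x$ and produce a witness $w_A \colon \isequiv(\id_A)$, so that $(\id_A, w_A) \colon A \simeq A$. Unfolding Definition \ref{def:isequiv}, the witness must inhabit
\[ \Bigg( \sum_{g \colon A \rightarrow A} g \circ \id_A \sim \id_A \Bigg) \times \Bigg( \sum_{g \colon A \rightarrow A} \id_A \circ g \sim \id_A \Bigg). \]
For both components I would again choose $g \colonequiv \id_A$; since $\id_A \circ \id_A$ computes definitionally to $\id_A$, each required homotopy of type $\prod_{x \colon A} \id_A(\id_A(x)) =_A \id_A(x)$ is provided by $\lambda(x \colon A).\, \refl_x$, because $\id_A(\id_A(x))$, $x$, and $\id_A(x)$ are all definitionally equal. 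This gives $c(A) \colonequiv (\id_A, ((\id_A, \lambda(x \colon A).\, \refl_x), (\id_A, \lambda(x \colon A).\, \refl_x)))$.

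Finally, applying path induction with motive $C$ and base case $c$ yields a term of $\prod_{A, B \colon U_i} \prod_{p \colon A =_{U_i} B} C(A, B, p)$; instantiating at the given $A$ and $B$ and observing that $C$ is constant in $p$ produces exactly the required function $\idtoeqv_{A,B} \colon A =_{U_i} B \rightarrow A \simeq B$, which moreover satisfies $\idtoeqv_{A,A}(\refl_A) \equiv (\id_A, w_A)$ by the computation rule for the equality type.

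The main obstacle I expect is not the induction itself, which is routine once the motive is fixed, but making the base case type-check: one must verify that the reflexivity terms genuinely inhabit the stated homotopy types, which hinges on the definitional equalities $g \circ \id_A \equiv \id_A$ and $\id_A \circ g \equiv g$ together with the $\beta$-rule $\id_A(x) \equiv x$. A secondary point requiring care is universe bookkeeping: one should check that $A \simeq B$, being built from dependent sums and products over $A, B \colon U_i$, indeed lands in $U_i$, so that $C$ is a well-formed type family admissible as a motive for the identity elimination rule at $U_i$.
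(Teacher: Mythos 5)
Your construction is correct: the single path induction with motive $C(A,B,p) \colonequiv A \simeq B$ and base case the identity equivalence $\bigl(\id_A, ((\id_A, \lambda x.\refl(x)), (\id_A, \lambda x.\refl(x)))\bigr)$ type-checks exactly as you argue, and your universe bookkeeping concern resolves favorably since $A \simeq B$ is built from $\Sigma$-, $\Pi$-, and identity types over types in $U_i$ and so lands in $U_i$. However, your route differs from the paper's in a way that matters downstream. The paper does \emph{not} induct on the whole equivalence at once: it first proves the transport lemma (Lemma \ref{lem:transport}), defines the underlying map of the equivalence to be $\transport^{\id_{U_i}}_{A,B,p} \colon A \rightarrow B$, and only then uses path induction, with motive $C(\widetilde{A},\widetilde{B},\widetilde{p}) \colonequiv \isequiv(\transport^{\id_{U_i}}_{\widetilde{A},\widetilde{B},\widetilde{p}})$, to produce the witness $\mathfrak{p}$; thus $\idtoeqv_{A,B}(p)$ is \emph{definitionally} a pair whose first component is transport along $p$. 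Your version instead yields a function whose underlying map is an unevaluated $\ind_=$ term that reduces to $\id_A$ only on $\refl$, and which for general $p$ is merely propositionally equal to transport. For the bare existence statement of the theorem this is immaterial, but the paper reproduces this proof explicitly precisely so that the univalence axiom (rule \ua) can be stated about the specific term \eqref{eq:idtoeqv}; substituting your $\idtoeqv$ there would give a formally different axiom schema. What your approach buys in exchange is economy: one induction instead of two, with no need for the transport lemma as a separate ingredient.
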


The proof of this theorem is straightforward and given in \cite{hottbook}; we reproduce it more explicitly here in order to facilitate the formal statement of the univalence axiom later.
First we need the following lemma.

\begin{lemma}[Transport lemma] \label{lem:transport}
Given a type $A \colon U_i$, a type family $P \colon A \rightarrow U_i$, $x, y \colon A$ and $p \colon x =_A y$, we can define a function $\transport^P_{x,y,p} \colon P(x) \rightarrow P(y)$ satisfying
\[
\transport^P_{x,x,\refl(x)} \equiv \id_{P(x)}
\]
for all $x \colon A$.
\end{lemma}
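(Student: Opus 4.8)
The plan is to construct $\transport^P_{x,y,p}$ by \emph{path induction} (the elimination rule for the equality type), following \cite{hottbook}. The key observation is that the target type $P(x) \rightarrow P(y)$ depends on the endpoints $x$ and $y$ and, formally, on the path $p$ between them, and path induction is precisely the principle that lets one define such a family of maps by specifying its value on reflexivity paths alone. So the whole construction reduces to naming the right motive and the right value at $\refl$.

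Concretely, I would take as motive the type family
\[ C(x, y, p) \colonequiv P(x) \rightarrow P(y), \]
defined for $x, y \colon A$ and $p \colon x =_A y$; note that this family is in fact constant in $p$, depending only on the endpoints, which is harmless. To apply path induction it then suffices to supply, for each $x \colon A$, a term of the diagonal type $C(x, x, \refl(x)) \equiv P(x) \rightarrow P(x)$, and here I would take the identity function $\id_{P(x)} \colonequiv \lambda(u \colon P(x)).\,u$. Path induction then yields, for all $x, y \colon A$ and all $p \colon x =_A y$, a term
\[ \transport^P_{x,y,p} \colonequiv \ind_{=_A}\big(x.y.p.\,(P(x) \rightarrow P(y)),\; x.\,\id_{P(x)},\; p\big) \colon P(x) \rightarrow P(y), \]
which is the desired transport function.

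It remains only to verify the stated definitional equation, and this is immediate from the computation rule for path induction: applied to a reflexivity path, the eliminator reduces definitionally to the supplied diagonal term, so $\transport^P_{x,x,\refl(x)} \equiv \id_{P(x)}$ for every $x \colon A$, as required.

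I do not expect any substantial obstacle, as this is a direct instance of the equality-type eliminator. The only points requiring care are bookkeeping ones: checking that the motive $C$ is a well-formed type family over the equality type (so that the eliminator is applicable under the context and universe rules of Section \ref{sec:ITT}), and confirming that the computation rule delivers a \emph{definitional} equality on the nose rather than merely a provable one. Both follow directly from the type rules imported from \cite{hottbook}.
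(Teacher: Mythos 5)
Your proposal is correct and follows essentially the same route as the paper's proof: path induction with motive $C(x,y,p) \colonequiv P(x) \rightarrow P(y)$, the identity function $\id_{P(x)}$ supplied on the diagonal, and the computation rule for $\ind_=$ yielding the definitional equality $\transport^P_{x,x,\refl(x)} \equiv \id_{P(x)}$. No gaps; this matches the paper's argument step for step.
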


\begin{proof}
By path induction on $x,y,p$.
Let $C(\widetilde{x},\widetilde{y},\widetilde{p}) \colonequiv P(\widetilde{x}) \rightarrow P(\widetilde{y})$.
It suffices to exhibit $c(\widetilde{x}) \colon C(\widetilde{x},\widetilde{x},\refl(\widetilde{x})) \equiv P(\widetilde{x}) \rightarrow P(\widetilde{x})$ for every $\widetilde{x} \colon A$, from which it will follow that
\[ \ind_=(\widetilde{x}.\widetilde{y}.\widetilde{p}.C, \widetilde{x}.c, x, y, p) \colon C(x,y,p) \equiv P(x) \rightarrow P(y) \]
for all $x, y \colon A$ and $p \colon x =_A y$.
Defining $c(\widetilde{x}) \colonequiv \id_{P(\widetilde{x})}$ for all $\widetilde{x}$ gives such a function.

Define
\[ \transport^P_{x,y,p} \colonequiv \ind_=(\widetilde{x}.\widetilde{y}.\widetilde{p}.C, \widetilde{x}.c, x, y, p), \]
then
\begin{align*}
\transport^P_{x,x,\refl(x)} & \equiv \ind_=(\widetilde{x}.\widetilde{y}.\widetilde{p}.C, \widetilde{x}.c, x, x, \refl(x)) \\
& \equiv c(x) \\
& \equiv \id_{P(x).}
\end{align*}
\end{proof}

\begin{proof}[Proof of Theorem \ref{thm:idtoeqv}]
Let $p \colon A =_{U_i} B$ be given, and consider
\[ \transport^{\id_{U_i}}_{A,B,p} \colon A \rightarrow B. \]
We claim that this is an equivalence, so that the desired function is
\begin{equation}
\idtoeqv_{A,B} \colonequiv \lambda(p \colon A =_{U_i} B).\,(\transport^{\id_{U_i}}_{A,B,p}, \mathfrak{p}), \label{eq:idtoeqv}
\end{equation}
where $\mathfrak{p} \colon \isequiv(\transport^{\id_{U_i}}_{A,B,p})$.

To construct such a term $\mathfrak{p}$ we use path induction on $A,B,p$.
Let
\[ C(\widetilde{A}, \widetilde{B}, \widetilde{p}) \colonequiv \isequiv(\transport^{\id_{U_i}}_{\widetilde{A},\widetilde{B},\widetilde{p}}). \]
We have
\[ \isequiv(\transport^{\id_{U_i}}_{\widetilde{A},\widetilde{B},\widetilde{p}}) \equiv \isequiv(\id_{\widetilde{A}}) \]
and
\[ c(\widetilde{A}) \colonequiv \big( (\id_{\widetilde{A}}, \lambda(x \colon \widetilde{A}).\,\refl(x)), (\id_{\widetilde{A}}, \lambda(x \colon \widetilde{A}).\,\refl(x)) \big) \colon \isequiv(\id_{\widetilde{A}}) \]
for all $\widetilde{A} \colon U_i$.
Thus
\[ \mathfrak{p} \colonequiv \ind_=(\widetilde{A}.\widetilde{B}.\widetilde{p}.C, \widetilde{A}.c, A, B, p) \colon \isequiv(\transport^{\id_{U_i}}_{A,B,p}). \]
\end{proof}

\subsection{The univalence axiom} \label{sec:ua}

In the previous paragraphs we proved that equal types are equivalent, or isomorphic.
The converse, that equivalent types are equal, is however not provable.
It is instead given by the \emph{univalence axiom} which in fact states slightly more: postulating that the function $\idtoeqv$ is itself an equivalence, and hence ``invertible''.
As a formal inference rule, it states:

\[
\begin{prooftree} \label{rule:UA}
  \hypo{\Gamma \context A \colon U_i}
  \hypo{\Gamma \context B \colon U_i}
  \infer2[\ua]{\Gamma \context \univalence(A,B) \colon \isequiv(\idtoeqv_{A,B})}
\end{prooftree}
\]
where $\isequiv$ and $\idtoeqv$ are as given in Definition \ref{def:isequiv} and \eqref{eq:idtoeqv}.

Define
\[ HoTT = ITT + \text{UA}, \]
that is, the homotopy type theory we consider is $ITT$ plus the univalence axiom schema.

\section{Metatheory} \label{sec:metatheory}

When considering a type theory, we often ask if the following problems are decidable.
In the following, $\Gamma$ is some context.

\begin{enumerate}
  \item \textbf{Definitional equality}: given two terms $a$ and $b$, determine if $\Gamma \context a \equiv b \colon A$ for some type $A$.
  \item \textbf{Type checking}: given a term $a$ and a type $A$, determine if $\Gamma \context a \colon A$.
  \item \textbf{Type inference}: given a term $a$, find a type $A$ such that $\Gamma \context a \colon A$.
\end{enumerate}

In \cite{ml73} a type theory which we will call $MLTT_{73}$ is presented and shown to have a normalization theorem, which then implies decidability of all three problems above \emph{for well-typed terms}.
The proof there is given for the empty context, but it is noted that it can be modified to accommodate open terms as well.

\subsection{Metatheory of $ITT$}

In this section we consider only $ITT$.
$ITT$ is very similar to $MLTT_{73}$, albeit different in a few ways.
Its universes are cumulative, so that not every term has a unique type up to definitional equivalence.
The rules for $\Pi$ and $\Sigma$ type formation (called reflection in \cite{ml73}) are slightly different to account for the different universe formulations.
Also, the definitional equality judgment in $ITT$ is typed while that of $MLTT_{73}$ is not.
Despite these differences, we can carry the proof of normalization over to $ITT$ with a small modification of the definitions.

\begin{definition}
Define the \textbf{reduction relation}
\[ \Gamma \context a \red b \]
on terms $a, b$ to be the partial order generated by the rules governing definitional equality $\Gamma \context a \equiv b \colon A$, with the exception of the symmetry rule \hyperref[rule:equiv-sym]{$\equiv$-\textsc{sym}}.
\end{definition}

That is to say, the rules governing $\Gamma \context a \red b$ are the reflexive, transitive, substitution, structural congruence, type computation, and term congruence rules, but replacing $\equiv$ with $\red$.

Note that the reduction relation does not specify \emph{a priori} that $a$ and $b$ have the same type.
In particular, it needs to be proved that $\Gamma \context a \red b$ and $\Gamma \context a \colon A$ implies $\Gamma \context b \colon A$ (and the analogous statement for definitional equality).
In fact this will be a consequence of the normalization theorem.

\begin{theorem}[Normalization theorem for $ITT$]
There is a normal form $a^\ast \colon A^\ast$ defined on well-typed terms $a \colon A$, together with an algorithm that gives, for every well-typed closed term $a \colon A$, a normal term $a^\ast \colon A^\ast$ such that $\context a \red a^\ast$.
\end{theorem}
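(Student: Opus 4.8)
The plan is to adapt Martin-L\"{o}f's computability argument for $MLTT_{73}$, accommodating the three differences noted above. I would first fix the notion of \emph{normal form}: a term is normal when it admits no further $\red$-reduction, and among normal terms I distinguish \emph{canonical} forms (those headed by an introduction constructor, e.g.\ $\lambda$-abstractions, pairs $(a,b)$, $\refl(x)$, numerals) from \emph{neutral} forms (elimination terms blocked on a variable). The reduction relation $\Gamma \context a \red b$ is already given, so the task reduces to exhibiting a terminating reduction strategy and showing that every well-typed term reaches a normal form under it.

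The heart of the argument is a family of \emph{computability predicates}, defined---as in the original---by a simultaneous induction that assigns to each computable type a set of computable inhabitants. Here the Russell-style cumulative hierarchy forces the main modification: since a type is itself a term and $U_i \colon U_{i+1}$, I would stratify the definition by well-founded recursion first on the universe level $i$ and then on the structure of the type. Concretely, a term $A \colon U_i$ is computable when it $\red$-reduces to a canonical type former whose constituents are themselves computable; and for a computable type the inhabitants are defined in the usual type-directed way---$f \colon \prod_{x \colon A} B$ is computable iff $f(a)$ is computable in $B[a/x]$ for every computable $a \colon A$, a term of $\sum_{x \colon A} B$ is computable iff it reduces to a pair with computable components, a term of an equality type iff it reduces to some $\refl$, and analogously for the finite and natural-number types.

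With the predicates in place I would prove the standard closure lemmas: (i) backward closure, that $\Gamma \context a \red a'$ with $a'$ computable implies $a$ computable; (ii) that every computable term is strongly normalizing and every neutral term, in particular every variable, is computable; and, crucially, (iii) \emph{cumulativity invariance}, that computability at $U_i$ implies computability at $U_{i+1}$, so that the rule $U_i$-\textsc{cumul} preserves computability. I would then establish the \emph{fundamental theorem}: for every derivable judgment $\Gamma \context a \colon A$ and every computable substitution $\sigma$ realizing $\Gamma$ (mapping each context variable to a computable term of the appropriate instantiated type), the instance $a[\sigma]$ is computable of type $A[\sigma]$. This is an induction on the derivation with one case per inference rule of $ITT$, the term-congruence and type-computation rules being exactly what make the introduction and elimination cases go through. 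Applying the fundamental theorem to the identity substitution---legitimate by (ii)---shows that every well-typed term is computable, hence strongly normalizing; combined with confluence of $\red$ (local confluence checked on the critical pairs of the reduction rules and upgraded to global confluence by Newman's lemma, using the strong normalization just obtained), this pins down a unique normal form $a^\ast \colon A^\ast$, and the normalizing reduction strategy supplies the required algorithm.

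The main obstacle, and the source of the ``small modification'' of the definitions, is the interaction between cumulativity and the simultaneous definition of computability. Because a term no longer has a unique type, I must check that the predicate is genuinely independent of which universe level witnesses a given type (lemma (iii)), and must arrange the well-founded order so that the clause for $U_i$---which refers to computability of the types living \emph{in} $U_i$, one level down---does not circularly invoke the level it is defining. Getting this stratification right, together with verifying that the slightly altered $\Pi$- and $\Sigma$-formation rules and the \emph{typed} definitional-equality judgment do not disturb the induction, is where essentially all the difficulty lies; the remaining rule-by-rule cases of the fundamental theorem are routine once the definitions are correctly set up.
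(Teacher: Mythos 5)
Your overall plan---reconstructing Martin-L\"{o}f's computability argument and modifying it to cope with Russell-style cumulative universes---is essentially the route the paper takes: its proof sketch consists precisely of a modified definition of normal form plus the claim that the proof of Section 3.3 of the 1973 paper goes through \emph{mutatis mutandis}, and that underlying proof is the computability argument you spell out. The two treatments of cumulativity differ in where the modification lives (the paper builds it into the inductive, constructor-generated definition of normal terms, via the clause allowing $C_i$ to be $U_k$ while the type reduces to $U_l$ with $k \leq l$; you put it into a stratification of the computability predicates plus a cumulativity-invariance lemma), but these are two implementations of the same idea, and yours is a reasonable---arguably more explicit---way to set it up.

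There is, however, one step in your proposal that would fail as stated: obtaining uniqueness of normal forms from ``local confluence checked on the critical pairs'' plus Newman's lemma. The terms of $ITT$ are Church-style, i.e.\ $\lambda$-abstractions carry domain annotations, and $\beta\eta$-reduction on annotated terms is \emph{not} locally confluent on raw terms: the critical pair $\lambda(x{\colon}A).\,(\lambda(y{\colon}B).\,b)\,x$ reduces by $\eta$ to $\lambda(y{\colon}B).\,b$ and by $\beta$ to $\lambda(x{\colon}A).\,b[x/y]$, and these are joinable only if the annotations $A$ and $B$ have a common reduct. Typing only guarantees $A \equiv B$ definitionally, so joining the pair requires exactly the Church--Rosser-type property you are trying to establish---the argument is circular. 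This is why the paper (following Martin-L\"{o}f) does not use confluence at all: normal forms are defined generatively (constructor-headed, with normal constituents), and well-definedness/uniqueness is obtained by a second induction, on derivations of the equality judgment $\context a \equiv b \colon A$, showing $a^\ast$ and $b^\ast$ are syntactically identical. Your proof is repaired by replacing the Newman step with that induction (your fundamental theorem already has to treat the typed equality judgment anyway, so the machinery is available); with that substitution the proposal matches the paper's argument. A minor further point: the relation $\red$ as defined in the paper is a reflexive--transitive closure, so ``strong normalization'' needs to be phrased in terms of the one-step generators, and for the theorem as stated (existence of $a^\ast$ and an algorithm) weak normalization from the computability predicates already suffices.
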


\begin{proof}[Proof sketch]
The normal form on closed well-typed terms is defined inductively:
\begin{itemize}
\item Any nullary constructor $c \colon C$ given by a type formation or introduction rule is a normal term with corresponding \emph{normal type} $C$.
Formation and introduction rules are those rules ending in -\form\ or -\intro, respectively.

\item If $c(x_1, \dotsc, x_n) \colon A(x_1, \dotsc, x_n)$ is a constructor with free variables
\[ x_1 \colon A_1, x_2 \colon A_2(x_1), \dotsc, x_n \colon A_n(x_1, \dotsc, x_{n-1}) \]
given by the conclusion of a formation or introduction rule,
\[ c_1 \colon C_1, c_2 \colon C_2, \dotsc c_n \colon C_n \]
are normal terms with corresponding normal types, and for each $1 \leq i \leq n$ either
  \begin{enumerate}[label=(\roman*)]
    \item $\context A_i[c_1, \dotsc, c_{i-1}/x_1, \dotsc, x_{i-1}] \red C_i$, or
    \item there are numerals $k \leq l$ such $C_i$ is syntactically identical to $U_k$ and
    \[ \context A_i[c_1, \dotsc, c_{i-1}/x_1, \dotsc, x_{i-1}] \red U_l, \]
  \end{enumerate}
then
\[ c[c_1, \dotsc, c_n/x_1, \dotsc, x_n] \colon A_n[c_1, \dotsc, c_n/x_1, \dotsc, x_n] \]
is a normal term with corresponding normal type.
\end{itemize}

Observe by induction that the normal terms as we have defined them are indeed \emph{normal}, that is, they do not reduce nontrivially.

Every normal term has a unique corresponding normal type.
The proof will show that the normal types $A^\ast$ corresponding to normal terms $a^\ast$ as given by the inductive definition agree with the normal form of the type $A \colon U_i$ considered as a term, where $U_i$ is the smallest universe containing $A$.
That is to say, if $a \colon A$ has normal form $a^\ast \colon A^\ast$, then $A \colon U_i$ has normal form $A^\ast \colon U_i$.

With the above definition of normal terms, the proof of the normalization theorem in Section 3.3 of \cite{ml73} goes through \emph{mutatis mutandis} for $ITT$.
The core idea is to show by induction on the length of a derivation of $\context a \colon A$ that there is a normal term $a^\ast \colon A^\ast$ such that $\context a \red a^\ast$, and by induction on the length of a derivation of $\context a \equiv b \colon A$ that $a^\ast$ and $b^\ast$ are syntactically identical.
\end{proof}

\begin{corollary}[Syntactic uniqueness of normal forms]
If $a \colon A$ and $b \colon A$ are closed normal terms with $\context a \equiv b \colon A$, then $a$ and $b$ are syntactically identical.
\end{corollary}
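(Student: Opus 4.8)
The plan is to derive this corollary almost immediately from the normalization theorem, the only genuine work being to observe that a normal term is its own normal form. I would lean on two ingredients established in the preceding development: first, the existence part of the normalization theorem, which assigns to every well-typed closed term $a \colon A$ a normal form $a^\ast \colon A^\ast$ with $\context a \red a^\ast$; and second, the fact proved there by induction on derivations of $\context a \equiv b \colon A$ that definitionally equal closed terms have syntactically identical normal forms, i.e. that $\context a \equiv b \colon A$ forces $a^\ast$ and $b^\ast$ to coincide syntactically.

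First I would show that if $a \colon A$ is a closed normal term, then $a^\ast$ is syntactically identical to $a$. By the observation recorded in the proof of the normalization theorem, the terms picked out by the inductive definition of normality do not reduce nontrivially; hence the only term to which a normal $a$ reduces is $a$ itself. Since the algorithm produces a normal $a^\ast$ satisfying $\context a \red a^\ast$, this $a^\ast$ must therefore be $a$ up to syntactic identity. The identical argument applied to $b$ yields that $b^\ast$ is syntactically identical to $b$.

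Next I would invoke the equality half of the normalization theorem. Since by hypothesis $\context a \equiv b \colon A$, the normal forms $a^\ast$ and $b^\ast$ are syntactically identical. Chaining the three syntactic identities $a = a^\ast$, $a^\ast = b^\ast$ and $b^\ast = b$, and using transitivity of syntactic identity, yields that $a$ and $b$ are syntactically identical, as required.

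The main obstacle has in fact already been discharged inside the normalization theorem itself, namely the induction on derivations of $\context a \equiv b \colon A$ that rules out distinct normal forms arising through the symmetry of definitional equality; this is precisely where the asymmetry between $\equiv$ and the reduction relation $\red$ must be reconciled. For the corollary proper the only remaining step is the fixed-point claim that a normal term equals its own normal form, which I expect to be routine, resting solely on the inductive verification (also carried out in the proof of the theorem) that normal terms admit no nontrivial reduction.
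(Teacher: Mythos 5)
Your proposal is correct and follows essentially the same route as the paper: closed normal terms coincide syntactically with their own normal forms, and the induction on derivations of $\context a \equiv b \colon A$ inside the normalization theorem (the ``last sentence'' of its proof sketch) forces $a^\ast$ and $b^\ast$ to be syntactically identical, whence $a = a^\ast = b^\ast = b$. The only difference is cosmetic: you spell out why a normal term is its own normal form (no nontrivial reductions), a step the paper simply asserts.
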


\begin{proof}
Since $a$ and $b$ are closed normal terms they are syntactically identical to their normal forms $a^\ast$ and $b^\ast$ respectively.
The result follows from the last sentence of the proof sketch above and the derivation of $\context a \equiv b \colon A$.
\end{proof}

This means that the closed normal terms form a collection of unique representatives for every $\equiv$-equivalence class of well-typed terms.
We are now able to prove decidability of the problems considered at the beginning of this section.

\begin{theorem}
Definitional equality for well-typed terms is decidable.
\end{theorem}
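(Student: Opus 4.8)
The plan is to reduce the question of definitional equality to a test of syntactic identity between normal forms, which the Normalization Theorem and its Corollary make possible. I restrict attention to closed terms, for which the normalization algorithm is available; the case of open terms follows from the extension of the normalization argument noted in \cite{ml73}. Given well-typed closed terms $a$ and $b$, the decision procedure is as follows: run the normalization algorithm to obtain the normal forms $a^\ast$ and $b^\ast$, and then test whether $a^\ast$ and $b^\ast$ are syntactically identical. The procedure reports that $a$ and $b$ are definitionally equal exactly when this test succeeds.

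Correctness of the procedure rests on the equivalence
\[ \context a \equiv b \colon A \text{ for some } A \quad\Longleftrightarrow\quad a^\ast \text{ and } b^\ast \text{ are syntactically identical.} \]
For the forward direction I would appeal to the final assertion of the proof sketch of the Normalization Theorem, namely that definitionally equal well-typed terms have syntactically identical normal forms; this is exactly the Corollary on syntactic uniqueness applied to $a^\ast$ and $b^\ast$, which are closed normal terms definitionally equal to $a$ and $b$, and hence to each other. For the converse, I would use that the reduction relation is generated by a subset of the rules governing $\equiv$, so that $\context a \equiv a^\ast \colon A$ and $\context b \equiv b^\ast \colon A$; if $a^\ast$ and $b^\ast$ coincide, the rules $\equiv$-\textsc{sym} and $\equiv$-\textsc{trans} then yield $\context a \equiv b \colon A$.

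Termination is immediate: the normalization algorithm halts by the Normalization Theorem, and comparing two finite term expressions modulo $\alpha$-equivalence (handled at the metalevel) is a decidable syntactic check. Combining termination with the correctness equivalence above shows that the procedure is a total decision method for definitional equality of well-typed terms.

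The step I expect to require the most care is the bookkeeping of the type annotation $A$, since $\equiv$ is a typed judgment and the reduction relation $\red$ does not a priori record types. I must rule out the degenerate possibility that $a$ and $b$ share an untyped reduct without being equal at a common type. For this I would invoke subject reduction --- a consequence of the Normalization Theorem, as remarked after the definition of $\red$ --- to transport the typings $a \colon A$ and $b \colon B$ along $\red$ to $a^\ast$ and $b^\ast$, together with the fact that every normal term has a unique corresponding normal type. When $a^\ast$ and $b^\ast$ are syntactically identical this forces $A$ and $B$ to share the normal type $A^\ast$, whence $\context A \equiv B \colon U_i$ and $a$, $b$ are definitionally equal at a common type. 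Discharging this alignment of types, rather than the routine normalize-and-compare core, is the real obstacle.
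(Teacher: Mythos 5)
Your overall strategy is the same as the paper's: normalize both terms, compare normal forms syntactically, appeal to the syntactic-uniqueness corollary for soundness, and to the normalization algorithm for termination. The difference lies in how the typed nature of $\equiv$ is handled, and this is where your argument has a genuine gap. The paper's decision procedure compares the normal forms of the \emph{types} as well: it reduces $a \colon A$ and $b \colon B$ to $a^\ast \colon A^\ast$ and $b^\ast \colon B^\ast$ and requires both that $a^\ast$ be syntactically identical to $b^\ast$ and that $A^\ast$ be syntactically identical to $B^\ast$. You compare only $a^\ast$ and $b^\ast$, and then argue that agreement of the types is forced, claiming that identity of $a^\ast$ and $b^\ast$, subject reduction, and uniqueness of the corresponding normal type together yield $\context A \equiv B \colon U_i$. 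That inference fails precisely because of cumulativity, which the paper singles out as a distinctive feature of $ITT$ (``not every term has a unique type up to definitional equivalence''). Take $a = b = U_0$ with the typings $a \colon U_1$ and $b \colon U_2$ (the latter derived via $U_i$-\textsc{cumul}): all terms involved are already normal, $a^\ast$ and $b^\ast$ are syntactically identical, subject reduction gives nothing new, and yet $U_1$ and $U_2$ are distinct normal types with $U_1 \not\equiv U_2$. The fact that the normal term $U_0$ has a unique \emph{corresponding} normal type in the sense of the inductive definition (namely $U_1$, coming from the rule $U_i$-\intro) does not imply that every type assigned to it normalizes to that corresponding type, which is exactly what your step ``this forces $A$ and $B$ to share the normal type $A^\ast$'' requires.

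Note that in this example your procedure still returns the correct answer to the problem as posed ($\context U_0 \equiv U_0 \colon U_1$ holds by $\equiv$-\textsc{refl}), so what breaks is not the algorithm but your correctness proof: the chain $a \equiv a^\ast$, $a^\ast = b^\ast$, $b^\ast \equiv b$ must be glued by $\equiv$-\textsc{sym} and $\equiv$-\textsc{trans} \emph{at a single type}, and producing that common type requires either an admissible rule lifting equality judgments along cumulativity (no such rule is among the rules of $ITT$, so this would itself demand a metatheoretic argument) or the paper's blunter device of making the comparison of $A^\ast$ with $B^\ast$ part of the decision procedure itself. You correctly identified the alignment of types as the crux of the matter; the repair is to resolve it as the paper does, rather than to derive $\context A \equiv B \colon U_i$, which is simply not true in a cumulative theory.
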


\begin{proof}
We can reduce any two well-typed terms $a \colon A$ and $b \colon B$ to their normal forms $a^\ast \colon A^\ast$ and $b^\ast \colon B^\ast$.
By syntactic uniqueness, $\context a \equiv b \colon A$ if and only if $a^\ast$ is syntactically identical to $b^\ast$ and $A^\ast$ is syntactically identical to $B^\ast$, which we can clearly check.
\end{proof}

We observe that this means that if $\Gamma \context a \red b$ for well-typed terms $a$ and $b$, then $\Gamma \context a \equiv b \colon A$ for some type $A$.

\begin{theorem}
It is possible to check if a well-typed term is a type.
Type checking and type inference for well-typed terms are decidable.
\end{theorem}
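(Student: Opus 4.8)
The plan is to reduce all three decision problems to the normalization algorithm, together with the already-established decidability of definitional equality and the syntactic uniqueness of normal forms. Throughout I work with closed well-typed terms, for which the normalization theorem applies directly; the extension to open terms follows the same pattern noted for $MLTT_{73}$. First I would dispatch \textbf{type inference}, which is essentially immediate. Given a well-typed term $a$, run the normalization algorithm to obtain its normal form $a^\ast$ together with the corresponding normal type $A^\ast$. Since $\vdash a \red a^\ast$, the observation recorded above (that reduction between well-typed terms is a definitional equality) together with subject reduction gives $\vdash a : A^\ast$. Hence the inference algorithm simply outputs the normal type $A^\ast$, which the normalization procedure already computes as a byproduct.

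Next I would treat \textbf{checking whether $a$ is a type}. In the Russell-style formulation a term is a type exactly when it inhabits some universe, i.e. when $\vdash a : U_i$ for some numeral $i$. So I infer the normal type $A^\ast$ of $a$ as above and check whether $A^\ast$ is syntactically of the form $U_i$. This is correct because, by the description of normal types in the normalization sketch, if $a$ is a type then its smallest inhabiting universe $U_k$ is already normal, so $A^\ast$ is syntactically $U_k$; conversely, if $A^\ast \equiv U_k$ then $\vdash a : U_k$, so $a$ is a type. Inspecting a normal term for the shape $U_i$ is plainly decidable.

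Finally, \textbf{type checking}: given $a$ and a type $A$, decide $\vdash a : A$. I normalize both, obtaining the normal type $A_0^\ast$ of $a$ and the normal form $A^\ast$ of $A$, and split on whether $A_0^\ast$ is a universe. If $A_0^\ast$ is not syntactically a universe, I claim $\vdash a : A$ holds iff $A \equiv A_0^\ast$, which is decidable by comparing $A^\ast$ with $A_0^\ast$ syntactically. If $A_0^\ast$ is syntactically $U_k$, then $a$ is itself a small type, and by cumulativity $\vdash a : A$ holds iff $A \equiv U_l$ for some $l \geq k$, i.e. iff $A^\ast$ is syntactically $U_l$ with $k \leq l$, a decidable numeral comparison. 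In both directions the backward implication uses the type congruence rule (and cumulativity), and the forward implication relies on the lemma isolated below.

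The main obstacle is justifying this case split, which amounts to a \emph{uniqueness-of-types-modulo-cumulativity} lemma: for a well-typed term $a$ with normal type $A_0^\ast$, the set of types $A$ with $\vdash a : A$ is exactly $\{A : A \equiv A_0^\ast\}$ when $A_0^\ast$ is not a universe, and $\{A : A \equiv U_l \text{ for some } l \geq k\}$ when $A_0^\ast$ is syntactically $U_k$. The content is that cumulativity is the \emph{only} source of non-uniqueness of types. I would prove this by induction on typing derivations, using the normalization theorem to pass to normal types and inspecting which rules can change the type of a term: the type congruence rule only moves within a single definitional-equality class, while the cumulativity rule strictly raises the universe level of a small type and applies only to terms whose normal type is a universe. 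Once this lemma is in hand, all three algorithms are correct, and each of their steps is decidable by normalization, decidable definitional equality, and syntactic comparison.
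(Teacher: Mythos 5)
Your proposal is correct and follows the paper's strategy---reduce everything to the normalization algorithm plus syntactic comparison of normal forms---but it is actually more careful than the paper's own proof on one point. The paper checks $\vdash a \colon A$ by normalizing $a$ to $a^\ast \colon B$, normalizing $A$ to $A^\ast$, and testing whether $B$ and $A^\ast$ are syntactically identical; since each normal term has a \emph{unique} normal type (corresponding to the smallest inhabited universe), this test as literally stated is incomplete in the presence of cumulativity: it would reject, e.g., $\vdash \mathbb{N} \colon U_1$ even though $U_i$-\textsc{cumul} derives it. Your case split on whether the inferred normal type is a universe, justified by the uniqueness-of-types-modulo-cumulativity lemma (cumulativity is the only rule that moves a term between non-equivalent types, and it applies only to terms inhabiting universes), is exactly what is needed to repair this, and your sketched induction on derivations is the right way to prove it---note that for closed terms the only rules that can end a typing derivation are the type rules, \textsc{type-cong}, and $U_i$-\textsc{cumul}, so the induction goes through. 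Two smaller remarks: for the ``is it a type'' test the paper inspects the head constant of the normal form of the term itself, whereas you inspect whether its inferred normal type is a universe; both are correct and essentially dual. And in your type-inference step, ``subject reduction'' is not quite the right citation---subject reduction gives $\vdash a^\ast \colon A$, whereas what you need is $\vdash a \colon A^\ast$, which follows from $A \red A^\ast$ (the normal type is the normal form of the original type), the observation that reduction implies definitional equality, and the rule \textsc{type-cong}; this is the step the paper flags with ``one can show that this implies $a \colon A^\ast$.''
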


\begin{proof}
Checking if a well-typed term $A$ is a type is solved by reducing $A$ to its normal form and observing its head constant.

Given a well-typed term $a$ and a well-formed type $A$, we check if $a \colon A$ by reducing $a$ to its normal form $a^\ast \colon B$, reducing $A$ to its normal form $A^\ast \colon U_i$, and then checking if $B$ and $A^\ast$ are syntactically identical.

To infer a type for a well-typed term $a$, we simply reduce to normal form to obtain $a^\ast \colon A^\ast$.
Then $a \equiv a^\ast \colon A^\ast$ and one can show that this implies $a \colon A^\ast$.
\end{proof}

Finally $ITT$ is consistent, in that there is no proof of contradiction from the empty context.

\begin{theorem}
In $ITT$, there is no derivable judgment $\context a \colon \bm{0}$.
\end{theorem}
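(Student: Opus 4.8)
The plan is to argue by contradiction, using the normalization theorem together with the inductive characterization of normal terms. Suppose that $\context a \colon \bm{0}$ for some term $a$; since the context is empty, $a$ is a closed well-typed term. First I would apply the normalization theorem to obtain a normal term $a^\ast \colon A^\ast$ with $\context a \red a^\ast$. By the agreement between the inductively defined normal type and the normal form of the type, recorded in the proof sketch, $A^\ast$ is the normal form of $\bm{0}$ regarded as a term. But $\bm{0}$ is a nullary type-formation constant, hence already normal, so by uniqueness of normal forms $A^\ast$ is syntactically identical to $\bm{0}$. Thus $a^\ast$ is a closed normal term whose corresponding normal type is $\bm{0}$.

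Next I would inspect the inductive definition of normal terms to see that this is impossible. Every closed normal term is headed by a constructor coming from the conclusion of a formation or introduction rule, and its corresponding normal type is the (substituted) type appearing in that conclusion. Running through the rules of $ITT$, the conclusion of every formation rule has a universe $U_i$ as the head of its type, while the conclusion of every introduction rule has as the head of its type the very type former being introduced --- $\Pi$, $\Sigma$, $\bm{1}$, $\mathbb{N}$, the coproduct, or the equality type. In particular the empty type $\bm{0}$ has a formation rule but no introduction rule, so no constructor produces a term whose conclusion type is headed by $\bm{0}$. Since substitution into these conclusion types never alters the head symbol (it is always a fixed type former, never a bound variable), the normal type of $a^\ast$ cannot be $\bm{0}$, contradicting the previous paragraph.

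The point that makes the argument clean is that the grammar of normal terms involves only formation and introduction constructors and excludes the eliminators $\ind$; a closed normal term is therefore a genuine canonical inhabitant, and the claim reduces to the observation that $\bm{0}$ has no canonical inhabitants. The main obstacle is the case analysis of the preceding paragraph: one must check, for each type former, that the head of its introduction rule's conclusion type is never $\bm{0}$, and confirm that the normal-form-agreement clause indeed forces $A^\ast$ to be syntactically $\bm{0}$, rather than merely some type definitionally equal to $\bm{0}$ (which would not by itself yield a syntactic contradiction). Once these are in place, the absence of a closed normal term of type $\bm{0}$ follows, and with it the consistency of $ITT$.
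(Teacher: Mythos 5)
Your proposal is correct and is essentially the paper's own argument: apply the normalization theorem to obtain a closed normal term of type $\bm{0}$, then observe that since $\bm{0}$ has no introduction rule there is no constructor, and hence no normal term, of that type. The extra care you take---checking that $A^\ast$ is \emph{syntactically} $\bm{0}$ via the normal-type agreement clause, and that substitution cannot change the head of a conclusion type---is a sound elaboration of details the paper's two-sentence proof leaves implicit, not a different route.
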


\begin{proof}
Otherwise $a$ would reduce to a normal term $a^\ast \colon \bm{0}$.
But since $\bm{0}$ has no introduction rule, there is no associated constructor and hence no normal term of type $\bm{0}$.
\end{proof}

The following notions are closely related to the normalization and decidability of definitional equality.

\begin{definition}
A \textbf{canonical term} $a \colon A$ is one generated only from constructors given by the introduction rules for $A$.
Note that we view the type formation rules as introduction rules for the respective universe types.
A type theory satisfies \textbf{canonicity} if for every type $A$ and term $a \colon A$, $a$ is definitionally equal to a canonical term in $A$.
\end{definition}

\begin{definition}
A type theory with a reduction relation $\twoheadrightarrow$ on terms satisfies \textbf{subject reduction} if whenever $a \colon A$ and $a \twoheadrightarrow b$, we also have $b \colon A$.
\end{definition}

Using the normalization theorem, the following theorem is straightforward.

\begin{theorem} \label{thm:ITT-canonicity-subj-red}
$ITT$ satisfies canonicity.
With the reduction relation $\red$, it also satisfies subject reduction.
\end{theorem}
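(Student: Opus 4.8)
The plan is to derive both properties directly from the normalization theorem and the syntactic uniqueness of normal forms, treating canonicity as essentially a repackaging of normalization and concentrating the real work on the type-tracking needed for subject reduction.

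For \textbf{canonicity}, I would start from a closed well-typed term $\context a \colon A$ and invoke the normalization theorem to obtain a normal term $a^\ast$ with $\context a \red a^\ast$. The key observation is that the normal terms, as defined inductively in the proof sketch above, are assembled \emph{only} from the constructors supplied by formation and introduction rules; under the stated convention that formation rules are the introduction rules of the universes, this means every normal term is a canonical term. It then remains to check that $a^\ast$ is canonical \emph{in $A$} and definitionally equal to $a$. The former follows from the normal-type clause of the normalization theorem: since $a^\ast$ has corresponding normal type $A^\ast$ and $\context A \red A^\ast$, we get $\context A \equiv A^\ast \colon U_i$ and hence $\context a^\ast \colon A$ by \textsc{type-cong}. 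The latter follows because $\context a \red a^\ast$ with both terms well-typed yields $\context a \equiv a^\ast \colon A$, by the observation recorded just after the decidability of definitional equality. Thus $a$ is definitionally equal to the canonical term $a^\ast$ in $A$.

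For \textbf{subject reduction}, assume $\context a \colon A$ and $\context a \red b$; I must produce $\context b \colon A$. The first step is to show that $b$ is well-typed at all. Here I would induct on the derivation of $\context a \red b$, using that its generating rules are exactly the definitional-equality rules with $\equiv$-\textsc{sym} removed, so that they retain all of their typing premises: reflexivity and each type computation rule carry premises that type both endpoints, the substitution and congruence rules preserve typeability of the reduct, and transitivity composes these. Once $b$ is known to be well-typed, the observation cited above gives $\context a \equiv b \colon A'$ for some type $A'$, and the syntactic uniqueness of normal forms tells us that $a$ and $b$ share a normal form and corresponding normal type. The final step is to transfer this back to the specific type $A$ via \textsc{type-cong}.

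The hard part will be this last transfer. Because the universe hierarchy is cumulative, a term need not have a unique type, so knowing $\context b \colon A'$ for \emph{some} $A'$ does not immediately yield $\context b \colon A$. To recover the original type I expect to need an inversion argument at each computation step, peeling the \textsc{type-cong} and $U_i$-\textsc{cumul} applications off the typing derivation of the redex to expose the underlying elimination/introduction pair and confirm that the reduct inherits exactly the redex's type before those applications. Normalization and uniqueness of normal forms should make this manageable by forcing the normal type corresponding to $a^\ast = b^\ast$ to sit at the bottom of the cumulative tower of types shared by $a$ and $b$, so that $A$ is then recovered by replaying on $b$ the same sequence of cumulativity and congruence steps that produced $\context a \colon A$.
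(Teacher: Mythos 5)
The paper offers no proof of this theorem at all: it appears immediately after the sentence ``Using the normalization theorem, the following theorem is straightforward,'' and nothing further is said. Your proposal is therefore best read as supplying the argument the paper intends, and it does follow that intended route---both properties are extracted from the normalization theorem and the corollary on syntactic uniqueness of normal forms. Your canonicity argument is correct and complete at the paper's level of rigor: closed normal terms are by construction built only from formation/introduction constructors (hence canonical, under the paper's convention that formation rules count as introduction rules for universes), and transferring the typing $a^\ast \colon A^\ast$ and the equality $a \equiv a^\ast$ back to the original type $A$ via \textsc{type-cong} is exactly the glue the paper leaves unstated. On subject reduction, your decomposition---first well-typedness of the reduct by induction on the derivation of $\context a \red b$, then transfer to the specific type $A$---is sound, and your diagnosis of the hard step is accurate: since the hierarchy is cumulative a term has no unique type, so the paper's observation only yields $\context a \equiv b \colon A'$ for some $A'$, and an inversion or minimal-type argument is genuinely needed to recover $A$ itself. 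This is a subtlety the paper never acknowledges (its decidability argument even tacitly assigns each normal term a unique normal type, which is precisely where cumulativity bites), so flagging it counts in your favor; the only caveat is that your final inversion step remains a plan rather than a finished argument, and the congruence cases of your first induction hide additional work (context conversion when a bound variable's type is rewritten) that would need to be spelled out.
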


\begin{remark}
In Section \ref{sec:universes} we remarked that cumulative Russell-style universes may cause issues with canonicity and subject reduction.
For example, in Section 3 of \cite{luo12} examples are given showing the failure of both properties for a particular type theory.
However we note that both the examples given there use features that are either not present, or formulated differently in $ITT$.
\end{remark}

\begin{remark}
The results we have stated in this section hold for well-typed terms.
It is further known that there are algorithms for type checking arbitrary terms in type theories closely related to $MLTT_{73}$; see for example \cite{coquand-type-checking, magnusson95}.
However it is unclear to the author if these algorithms adapt to the specific case of $ITT$.
\end{remark}

\subsection{Metatheory of $HoTT$ (and variants)}

We have seen that the univalence axiom expresses a metamathematical property we would like our type theory to possess, namely that isomorphic structures are equal.
We have also seen that the foundational theory $ITT$ we have chosen has a normalization property, and as a consequence, is consistent and has decidable judgments for all well-typed terms.
Unfortunately, it is for the moment unclear if all these desirable properties can be merged into $HoTT$.

The univalence axiom schema \hyperref[rule:UA]{UA} introduces a non-canonical term constructor $\univalence(\cdot, \cdot)$ into specific $\Sigma$-types, which breaks canonicity and normalization as defined for $ITT$.
It appears to still be open if there exists a modified normalization process for $HoTT$ (and the theory in \cite{hottbook}).

Metatheoretic consequences of the univalence axiom together with Martin-L\"{o}f type theory in general have been considered in other works, for example in \cite{simplicial-model-uf} where it is shown that a variant of homotopy type theory with one Tarski-style universe is consistent relative to
\[ \mathrm{ZFC} + \text{ there exist two strong inaccessible cardinals } \alpha, \beta. \]
More recently, Coquand et al. \cite{cubicaltt} have constructed models of a \emph{cubical type theory} which uses Russell-style universes and \emph{path types} in place of the standard equality type, in which univalence is a theorem and canonicity for the natural numbers holds \cite{huber17}.
In this theory the standard intensional equality type can be defined in terms of the path type.

\chapter{The Isabelle/Pure framework} \label{ch:2}

We aim to implement a working version of $HoTT$ inside the interactive computer proof assistant Isabelle \cite{isabelle2018}.
Isabelle is a \emph{generic} proof assistant, in that it supports the usage of many different logics of which, loosely speaking, any one may be ``loaded'' for a given proof session.
Currently the largest and most developed logic is \emph{higher-order logic} or HOL, although there are also implementations of first-order logic and Zermelo-Fraenkel set theory, extensional Martin-L\"{o}f type theory, modal logic and various sequent calculi, etc.
Each of these \emph{object logics} is implemented on top of Isabelle's base ``meta-logical framework'', which is called Isabelle/Pure or simply \emph{Pure}.

Pure provides both an extensible logic in which the terms and formulas of a given object logic can be formalized, as well as a set of procedures for manipulating and proving such formalized statements.
In this chapter we give an introduction to both aspects of the framework.
This introduction is not intended to be a complete description; we have omitted parts (e.g. type classes, locales etc.) that are not directly used in our implementation of $HoTT$.\footnote{Readers interested in knowing more are referred to Section 2.1 of \cite{isar-ref} Chapters 2, 4 and 5 of \cite{implementation}, and Part I of \cite{old-intro} as starting points.}

\section{Formal deductive systems}

Before introducing Pure, it will help to make explicit the framework we have been implicitly using to formulate logical systems like $HoTT$.
The definitions in this section are based on Ch.\ 6 of \cite{lambda-calculus} and Ch.\ 4 of \cite{grabmeyer05}.

\subsection{Hilbert-style deductive systems} \label{sec:deductive-systems}

\begin{definition} \label{def:deductive-system}
A (Hilbert-style) \textbf{deductive system} is a triple
\[ \mathcal{D} = (\mathcal{L}, \mathcal{F}, \mathcal{R}) \]
consisting of the \emph{language} of the system, its \emph{formulas}, and its \emph{inference rules}.
The language $\mathcal{L}$ is a set with distinguished subsets called \emph{syntactic categories}.
The set of formulas $\mathcal{F}$ is a set of well-formed expressions typically defined inductively over the elements of $\mathcal{L}$.
An inference rule $R \in \mathcal{R}$ is given by a set $R = \left\{(\Phi_\alpha, \phi_\alpha) \,\middle|\, \alpha \in I_{R}\right\}$ of \emph{rule instances}, where each $\Phi_\alpha \subseteq \mathcal{F}$ is a finite set of formulas called the \emph{premises} of the particular rule instance, and $\phi_\alpha \in \mathcal{F}$ is the \emph{conclusion} of the rule instance.
\end{definition}

In other words, an inference rule $R$ for a deductive system is a binary relation between $\mathcal{P}_f(\mathcal{F})$ and $\mathcal{F}$.\footnote{An inference rule $R$ is usually required to be a \emph{partial function} from $\mathcal{P}_f(\mathcal{F})$ to $\mathcal{F}$, so that the result of applying $R$ is fully determined by the premises to which it is applied.
However we will not require this.}
We use the usual proof tree notation
\begin{prooftree*}
\hypo{\phi^1_\alpha}
\hypo{\cdots}
\hypo{\phi^n_\alpha}
\infer3[$R$]{\phi_\alpha}
\end{prooftree*}
where $\Phi_\alpha = \{\phi^1_\alpha, \dotsc, \phi^n_\alpha\}$, to denote instances of a rule $R$ (the order of the premises does not matter).
Defining a rule by specifying all its members is frequently infeasible, so we will instead define a rule by specifying a \emph{schema} for its instances, in a similar way to how one defines axiom schemas for ZFC and other theories.
Rule schemas frequently include \emph{side conditions} further constraining the rule instances.

\begin{definition}
An inference rule of the form
\[ R = \left\{ (\emptyset, \phi_\alpha) \,\middle|\, \alpha \in I_R \right\} \]
is known as an \textbf{axiom schema}.
The conclusions $\phi_\alpha$ of its individual instances are known as \textbf{axioms}.
\end{definition}

Let us view the theory $HoTT$ as a deductive system.
Its language consists of the expressions of the typed $\lambda$-calculus, the term constructors given by the type formation and introduction rules, the univalence constructor, the context variables $\Gamma$, $\Delta$, and the improper symbols $\ctx$, $\context$, $\equiv$ and $\colon$.
There are distinguished syntactic categories of \emph{terms} and \emph{types}.
Its formulas are given by the judgment forms described in Section \ref{sec:judgments}, and its inference rules are those described in Sections \ref{sec:ITT} and \ref{sec:HoTT}.
There is only one axiom of $HoTT$, namely the conclusion of $\ctx$-\emp,
\[ \cdot\ \ctx. \]
Note that in the terminology of deductive systems, the univalence axiom schema is really a rule schema.

The following definitions are standard.

\begin{definition} \label{def:derivation}
Let $\mathcal{D} = (\mathcal{L}, \mathcal{F}, \mathcal{R})$ be a deductive system, $\Gamma \subseteq \mathcal{F}$, and $\varphi \in \mathcal{F}$.
A \textbf{derivation of $\varphi$ from premises $\Gamma$ in $\mathcal{D}$} is a labelled tree:
\begin{enumerate}[label=(\roman*)]
\item whose root is labelled $\varphi$,
\item whose set of leaf labels is $\Gamma$, and
\item where every label of an internal node $u$ is the conclusion of an instance of an inference rule whose premises are the labels of the children of $u$.
\end{enumerate}
Leaf labels that are not axioms are the \textbf{assumptions} of the derivation.
A derivation of $\varphi$ only from axioms (i.e.\ without assumptions) is a \textbf{derivation of $\varphi$}.
\end{definition}

We will write derivations using the usual proof tree notation, with conclusion on the bottom.

We write
\[ \Gamma \turnstile_{\mathcal{D}} \varphi \]
to express that there is a derivation of $\varphi$ from premises $\Gamma$ in $\mathcal{D}$.
In such a case $\varphi$ is said to be \emph{derivable} from $\Gamma$.
A derivation of $\varphi$ without assumptions may be written
\[ \turnstile_{\mathcal{D}} \varphi \]
instead.

\begin{definition} \label{def:theorem}
A \textbf{theorem} of a deductive system $\mathcal{D}$ is a formula $\varphi$ of $\mathcal{D}$ such that $\turnstile_{\mathcal{D}} \varphi$, i.e.\ $\varphi$ is derivable from the axioms of $\mathcal{D}$ alone.
\end{definition}

The following concepts will be used when we later come to analyze our implementation of $HoTT$.

\begin{definition}
Let $\mathcal{D} = (\mathcal{L}, \mathcal{F}, \mathcal{R})$ be a deductive system, and let
\[ R = \left\{ (\Phi_\alpha, \phi_\alpha) \,\middle|\, \alpha \in I_R \right\} \]
be a binary relation from $\mathcal{P}_f(\mathcal{F})$ to $\mathcal{F}$, not necessarily in $\mathcal{R}$.
$R$ is called:
\begin{enumerate}[label=(\roman*)]
\item a \textbf{derivable rule in $\mathcal{D}$} if $\Phi_\alpha \vdash_\mathcal{D} \phi_\alpha$ for every instance $(\Phi_\alpha, \phi_\alpha)$ of $R$.

\item an \textbf{admissible rule in $\mathcal{D}$} if adding $R$ to $\mathcal{R}$ does not increase the set of theorems of $\mathcal{D}$, i.e.\ if every theorem of $\mathcal{D} + R$ is a theorem of $\mathcal{D}$.

\item a \textbf{correct rule in $\mathcal{D}$} if for every instance $(\Phi_\alpha, \phi_\alpha)$ of $R$, whenever the formulas in $\Phi_\alpha$ are all theorems of $\mathcal{D}$, then so is $\phi_\alpha$.
\end{enumerate}
\end{definition}

It is clear that a rule $R$ derivable in $\mathcal{D}$ is also admissible in $\mathcal{D}$: consider any derivation that uses $R$ and replace the application of $R$ with its derivation in $\mathcal{D}$.
The converse is not true in general (see Remark 15.2 of \cite{lambda-calculus}).

We also have equivalence of rule admissibility and correctness for deductive systems:

\begin{proposition} \label{prop:admissible-correct}
$R$ is an admissible rule in $\mathcal{D}$ if and only if it is correct in $\mathcal{D}$.
\end{proposition}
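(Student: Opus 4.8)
The plan is to prove the two implications separately, with the reverse direction carrying essentially all the content. For the forward direction I would assume $R$ is admissible and verify correctness directly. Take any instance $(\Phi_\alpha, \phi_\alpha)$ of $R$ all of whose premises are theorems of $\mathcal{D}$. Since every derivation in $\mathcal{D}$ is also a derivation in $\mathcal{D} + R$, each premise is a theorem of $\mathcal{D} + R$; grafting their assumption-free derivations onto the leaves of a single application of the rule instance $R$ yields an assumption-free derivation of $\phi_\alpha$ in $\mathcal{D} + R$, so $\phi_\alpha$ is a theorem of $\mathcal{D} + R$. Admissibility then gives that $\phi_\alpha$ is a theorem of $\mathcal{D}$, which is exactly what correctness requires.

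For the reverse direction I would assume $R$ is correct and prove, by induction on the height of an assumption-free derivation in $\mathcal{D} + R$, that any formula admitting such a derivation is a theorem of $\mathcal{D}$; applied to a theorem $\varphi$ of $\mathcal{D} + R$ this yields the claim. At the root, let $\psi$ be the conclusion, produced by a rule instance with premises $\phi^1, \dots, \phi^n$; the children are roots of assumption-free subderivations of strictly smaller height, so the induction hypothesis makes each $\phi^i$ a theorem of $\mathcal{D}$. I then split on the rule applied at the root: if it lies in $\mathcal{R}$, grafting the assumption-free $\mathcal{D}$-derivations of the $\phi^i$ beneath a final application of that rule produces an assumption-free $\mathcal{D}$-derivation of $\psi$; if it is the new rule $R$, then $(\{\phi^1, \dots, \phi^n\}, \psi)$ is an instance of $R$ all of whose premises are $\mathcal{D}$-theorems, and correctness delivers $\psi$ as a theorem of $\mathcal{D}$. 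The base case is the instance $n = 0$, where a leaf is either an axiom of $\mathcal{D}$—recall that $\mathcal{D} + R$ introduces no new axioms—or the conclusion of a zero-premise instance of $R$, covered by the vacuous case of correctness.

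The forward direction is essentially a one-line observation, so I expect the substance to lie entirely in the reverse direction. The key subtlety is that correctness is phrased in terms of theorems, i.e.\ assumption-free derivability, rather than derivability from arbitrary premises; this is what forces the induction to be run on assumption-free derivations, so that the induction hypothesis supplies genuine $\mathcal{D}$-theorems as exactly the premises consumed by each application of $R$. A secondary point requiring care is the uniform treatment of zero-premise rule instances—including the possibility that $R$ itself has such instances, in which case it behaves like an axiom schema—which the $n = 0$ case of the induction absorbs.
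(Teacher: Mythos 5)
Your proof is correct and follows essentially the same route as the paper: the forward direction is the same one-line observation (graft theorem derivations onto an instance of $R$, then invoke admissibility), and the reverse direction is the same replacement argument, eliminating applications of $R$ via correctness. The only difference is that you carry out explicitly, by induction on derivation height, the step the paper treats informally as iterative replacement and defers with the parenthetical remark ``one proves this more rigorously using induction.''
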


\begin{proof}
Assume $R$ is admissible in $\mathcal{D}$ and let
\[
\begin{prooftree}
    \hypo{\phi^1_\alpha}
    \hypo{\cdots}
    \hypo{\phi^n_\alpha}
    \infer3[$R$]{\phi_\alpha}
\end{prooftree}
\]
be an instance of $R$ whose premises are all theorems of $\mathcal{D}$.
Then $\phi_\alpha$ is a theorem of $\mathcal{D} + R$, and by admissibility of $R$ is also a theorem of $\mathcal{D}$.

Conversely let $R$ be correct in $\mathcal{D}$, and let $D$ be a derivation of a theorem $\varphi$ that uses $R$.
For each application
\[
\begin{prooftree}
    \hypo{\phi^1_\alpha}
    \hypo{\cdots}
    \hypo{\phi^n_\alpha}
    \infer3[$R$]{\phi_\alpha}
\end{prooftree}
\]
of an instance of $R$ in $D$, by correctness of $R$ the subderivation of $D$ with root $\phi_\alpha$ may be replaced with a derivation of $\phi_\alpha$ that uses only the rules of $\mathcal{D}$.
Doing this iteratively for every occurrence of an application of $R$ yields a derivation of $\varphi$ using only rules of $\mathcal{D}$, so $\varphi$ is a theorem of $\mathcal{D}$.
(One proves this more rigorously using induction.)
\end{proof}

\subsection{Natural deduction systems}

Here we give a brief indication of how one can define \emph{natural deduction systems}, which differ from Hilbert-style deductive systems in that they incorporate the notions of \emph{assumption} and \emph{discharge}.

\begin{definition}
A \textbf{natural deduction system} is a triple
\[ \mathcal{N} = (\mathcal{L}, \mathcal{F}, \mathcal{R}) \]
whose language $\mathcal{L}$ and formulas $\mathcal{F}$ are defined as for deductive systems.
A rule $R \in \mathcal{R}$ is a set of the form
\[ R = \left\{(\Phi_\alpha, \phi_\alpha) \,\middle|\, \alpha \in I_R \right\} \]
where each $\phi_\alpha \in \mathcal{F}$ and each $\Phi_\alpha$ is a finite set whose elements are now either formulas $\phi^i_\alpha \in \mathcal{F}$ or \emph{pairs} $(\psi^i_\alpha, \phi^i_\alpha) \in \mathcal{F} \times \mathcal{F}$.
As before, the elements $(\Phi_\alpha, \phi_\alpha)$ of $R$ are known as the \emph{instances} of the rule $R$.
\end{definition}

Instances $(\Phi_\alpha, \phi_\alpha) = (\{ (\psi^{i_1}_\alpha, \phi^{i_1}_\alpha), \dotsc, (\psi^{i_m}_\alpha, \phi^{i_m}_\alpha), \phi^{j_1}_\alpha, \dotsc, \phi^{j_n}_\alpha \}, \phi_\alpha)$ of a rule $R$ are written

\[
\begin{prooftree}
    \hypo{[\psi^{i_1}_\alpha]}
    \ellipsis{}{\phi^{i_1}_\alpha}
    \hypo{\cdots}
    \hypo{[\psi^{i_m}_\alpha]}
    \ellipsis{}{\phi^{i_m}_\alpha}
    \hypo{\phi^{j_1}_\alpha}
    \hypo{\cdots}
    \hypo{\phi^{j_n}_\alpha}
    \infer6[$R$]{\phi_\alpha}
\end{prooftree}
\]
\smallskip

That is, the first coordinate of a pair $(\psi^i_\alpha, \phi^i_\alpha)$ in $\Phi_\alpha$ expresses an assumption that is discharged by application of the rule.
Again, the order of the premises does not matter.

The definitions of axiom schema and axiom for deductive systems carry over to the case of natural deduction systems.

We can define the notion of a derivation in a natural deduction system by modifying the definition for deductive systems to incorporate discharge of assumptions, essentially by requiring every internal node $u$ of a proof tree $D$ to be labelled by the conclusion of some rule instance $(\Phi_\alpha, \phi_\alpha)$ where
\begin{enumerate}[label=(\roman*)]
\item every $\phi^i_\alpha \in \Phi_\alpha$ is the label of a child of $u$ in $D$, and
\item for every $(\psi^i_\alpha, \phi^i_\alpha) \in \Phi_\alpha$, there is a child $v$ of $u$ labelled $\phi^i_\alpha$ and a branch ending at $v$ whose leaf is labelled $\psi^i_\alpha$.
\end{enumerate}
As we will not actually use such a definition in this thesis, we will not say any more beyond this.

Note that every natural deduction system $\mathcal{N} = (\mathcal{L}, \mathcal{F}, \mathcal{R})$ has a (possibly trivial) Hilbert-style \textbf{deductive core} $\mathcal{D}_\mathcal{N} = (\mathcal{L}, \mathcal{F}, \mathcal{R}^\prime)$ where $\mathcal{R}^\prime \subseteq \mathcal{R}$ is the collection of rules of $\mathcal{N}$ whose premises are all formulas, i.e.\ they do not involve discharge of assumptions.
In $\mathcal{D}_\mathcal{N}$, all the definitions and results of Section \ref{sec:deductive-systems} hold.
We will formulate the Pure logic as a natural deduction system, but will thereafter mainly work in its deductive core.

\section{The Pure logic}

The logic of Pure \cite{paulson-foundation} is based on Church's simple type theory, also known as higher-order logic \cite{church-simple-type-theory, andrews-math-logic}.
It is called \emph{simple} because types do not depend on any other values, in contrast to dependent type theory, where types may depend on terms, or polymorphic type theory where they may depend on other types (although as we will see, the Pure logic implements type polymorphism in practice).

\subsection{Types and terms} \label{sec:types-and-terms}

The types of Pure are generated inductively by the basic types, and function types of the form
\begin{pure}
T \Map S
\end{pure}
where \code{T} and \code{S} are types.
The single basic type declared by Pure is \code{prop}, the type of logical propositions; additional basic types may be declared by object logics.

The terms of Pure are the usual terms of the typed lambda-calculus, classified into three major syntactic classes: constants, variables, and lambda expressions.
Isabelle annotates terms with the syntactic class to which they belong.
The terms are further augmented with the three constants \All, $\Longrightarrow$, and $\equiv$, which we will define shortly.
Object logics may also define additional constants.
Terms of type \code{prop} are called \emph{formulas}; they are the judgments of the Pure logic.
They are also the formulas of the theory, seen as a natural deduction system.

The application of a term \code{t} to argument terms \code{t\sub{1}}, \textellipsis, \code{t\sub{n}} is written in higher-order style, i.e.\ as
\begin{pure}
(t t\sub{1} \textellipsis t\sub{n})
\end{pure}
instead of \code{t(t\sub{1}, \textellipsis, t\sub{n})}.
In a term \code{t} obtained by the successive application of terms to arguments, the outermost leading term is known as the \emph{head} of \code{t}.

Term expressions in Pure may contain free variables.
As usual we will write
\[
\code{t}[\code{a\sub{1}, \textellipsis, a\sub{n}}/\code{x\sub{1}, \textellipsis, x\sub{n}}]
\]
to denote the simultaneous replacement of the free variables \code{x\sub{i}} with the terms \code{a\sub{i}} in the term \code{t}.\footnote{This is a notation of theory, as opposed to actual Pure syntax.
We will see later that although Pure has a notion of ``free variable'', it has no native way of expressing the substitution of terms in place of these free variables.
We will instead have to encode substitutions \code{t$[$a$/$x$]$} as applications of function terms \code{t} of type \code{T \Map\ S} to terms \code{a}.}

We write
\begin{pure}
t \Dblcolon T
\end{pure}
to express that term \code{t} has type \code{T}.
Note that this expression is not a formula but simply a type annotation, and only appears as a fragment of a larger formula.
Isabelle performs automatic inference of the most general type of entities in the system, so type annotations on terms are optional, except in cases where a more specific type is needed.

Pure's type implementation has parametric polymorphism.
Type variables are prefixed with apostrophes, as in \code{'a}.
Note that the original formulation of simple type theory is non-parametric, replacing every instance of a polymorphic constant \code{t \Dblcolon\ 'a} with a family of constants \code{t\sub{T} \Dblcolon\ T} for every type \code{T}.

\subsection{Logical constants}

Pure provides three constants for constructing formulas,
\begin{align*}
\equiv & \dblcolon \texttt{'a} \Rightarrow \texttt{'a} \Rightarrow \mathtt{prop} \\
\Longrightarrow & \dblcolon \mathtt{prop} \Rightarrow \mathtt{prop} \Rightarrow \mathtt{prop} \\
\bigwedge & \dblcolon \texttt{('a} \Rightarrow \texttt{prop)} \Rightarrow \mathtt{prop}
\end{align*}
expressing definitional equality, implication, and universal quantification respectively.
Note the difference between the Pure arrows \Map\ and \Implies: the first is the function type constructor, which takes type arguments; the second is the implication operator, which takes formula arguments.

We write, as usual, \code{(a \Equiv\ b)} and \code{(p \Implies\ q)} in place of \code{(\Equiv\ a b)} and \code{(\Implies\ p q)}.
Universal quantification is written as
\begin{pure}
\All{}x. p
\end{pure}
instead of \code{\All($\lambda$x. p)}.
By convention, the scope of the binder \All\ extends to the end of the line.
We further abbreviate
\begin{pure}
\All{}x\sups{1}. \Dotsc. \All{}x\sups{k}. p
\end{pure}
by
\begin{pure}
\All{}x\sups{1} \Dotsc x\sups{k}. p\textrm{.}
\end{pure}
\subsection{Logical rules} \label{sec:pure-logical-rules}

Here we state the theory of the inference rules governing the Pure logical constants.
We will elaborate on their implementation and usage in practice later.
In the following statements, \code{p} and \code{q} are terms of type \code{prop}, \code{x} and \code{y} are variables, and \code{a}, \code{b}, \code{c}, \code{f}, \code{g} stand for arbitrary terms of the appropriate type.
\bigskip

\noindent \emph{Implication:}

\[
\begin{prooftree}[center=false]
    \hypo{[\code{p}]}
    \ellipsis{}{\code{q}}
    \infer1[\Implies\I]{\code{p \Implies\ q}}
\end{prooftree}
\qqquad
\begin{prooftree}[center=false]
    \hypo{\code{p \Implies\ q}}
    \hypo{\code{p}}
    \infer2[\Implies\E]{\code{q}}
\end{prooftree}
\]
\smallskip
where \code{p} is discharged from the premises of \Implies\I.
\bigskip

\noindent \emph{Universal quantification:}

\[
\begin{prooftree}[center=false]
    \hypo{\code{p}}
    \infer1[\All\I]{\code{\All x. p}}
\end{prooftree}
\qqquad
\begin{prooftree}[center=false]
    \hypo{\code{\All x. p}}
    \infer1[\All\E]{\code{p}[\code{t}/\code{x}]}
\end{prooftree}
\]
\smallskip
In \All\I, \code{x} must not be free in any of the assumptions of the derivation of \code{p}.
In \All\E, \code{t} is any term of appropriate type.
\bigskip

\noindent\emph{Definitional equality and lambda expressions:}

\[
\begin{prooftree}[center=false]
    \infer0[\Equiv\R]{\code{a \Equiv\ a}}
\end{prooftree}
\qqquad
\begin{prooftree}[center=false]
    \hypo{\code{a \Equiv\ b}}
    \infer1[\Equiv\s]{\code{b \Equiv\ a}}
\end{prooftree}
\qqquad
\begin{prooftree}[center=false]
    \hypo{\code{a \Equiv\ b}}
    \hypo{\code{b \Equiv\ c}}
    \infer2[\Equiv\T]{\code{a \Equiv\ c}}
\end{prooftree}
\]

\[
\begin{prooftree}[center=false]
    \infer0[$\alpha$-\conv]{\code{(\(\lambda\)x. b) \Equiv\ (\(\lambda\)y. b\([\)y\(/\)x\(]\))}}
\end{prooftree}
\qqquad
\begin{prooftree}[center=false]
    \infer0[$\beta$-\conv]{\code{(\(\lambda\)x. b) a \Equiv\ b\([\)a\(/\)x\(]\)}}
\end{prooftree}
\]

\[
\begin{prooftree}
    \infer0[$\eta$-\conv]{\code{(\(\lambda\)x. f x) \Equiv\ f}}
\end{prooftree}
\]
\smallskip
In $\alpha$-\conv, \code{y} must not be free in \code{b}.
In $\eta$-\conv, \code{x} must not be free in \code{f}.
Note that in contrast to $HoTT$, the Pure definitional equality has $\alpha$-conversion built in.
\bigskip

\noindent\emph{Abstraction and combination rules:}

\[
\begin{prooftree}[center=false]
    \hypo{\code{a \Equiv\ b}}
    \infer1[\textsc{abs}]{\code{(\(\lambda\)x. a) \Equiv\ (\(\lambda\)x. b)}}
\end{prooftree}
\qqquad
\begin{prooftree}[center=false]
    \hypo{\code{f \Equiv\ g}}
    \hypo{\code{a \Equiv\ b}}
    \infer2[\textsc{comb}]{\code{f a \Equiv\ g b}}
\end{prooftree}
\]
In \textsc{abs}, \code{x} must not be free in any of the assumptions of the derivation of \code{a \Equiv\ b}.
\bigskip

\noindent\emph{Function extensionality:}

\[
\begin{prooftree}
    \hypo{\code{f x \Equiv\ g x}}
    \infer1[\ext]{\code{f \Equiv\ g}}
\end{prooftree}
\]
where \code{x} must not be free in \code{f}, \code{g}, or the assumptions of the derivation of \code{f x \Equiv\ g x}.

In fact, $\eta$-conversion and function extensionality are equivalent:

\begin{proposition}[Theorem 7.4, Hindley and Seldin \cite{lambda-calculus}] \label{prop:eta-ext-equiv}
Let $Pure_\eta$ be the theory consisting of the rules given above without the extensionality rule \ext, and let $Pure_\ext$ be the theory consisting of the rules given above without $\eta$-conversion $\eta$-\conv.
Then $\eta$-\conv\ is derivable in $Pure_\ext$ and \ext\ is derivable in $Pure_\eta$.
\end{proposition}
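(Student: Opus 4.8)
The plan is to establish the two derivability claims independently; the arguments are essentially dual, each consisting of a single application of the distinguished rule (\ext\ in one direction, $\eta$-\conv\ in the other) wrapped in the equivalence rules \Equiv\s\ and \Equiv\T. Neither direction will need the combination rule \textsc{comb} nor the congruence machinery, so the only inputs are abstraction, $\beta$- or $\eta$-conversion, and the equivalence rules.

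First I would derive $\eta$-\conv\ in $Pure_\ext$. The goal is $(\lambda x.\, f\, x) \equiv f$ under the side condition that $x$ is not free in $f$. I would obtain this by a single application of \ext, reading the ``functions'' of that rule as $(\lambda x.\, f\, x)$ and $f$. Choosing a variable $z$ fresh for both of these, it suffices to derive $(\lambda x.\, f\, x)\, z \equiv f\, z$; but $\beta$-\conv\ gives exactly $(\lambda x.\, f\, x)\, z \equiv (f\, x)[z/x]$, and since $x$ is not free in $f$ the right-hand side is literally $f\, z$. As this derivation uses no open assumptions, the side condition of \ext\ reduces to the freshness of $z$, which holds by choice. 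Applying \ext\ then yields $(\lambda x.\, f\, x) \equiv f$.

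Conversely I would derive \ext\ in $Pure_\eta$. Suppose $f\, x \equiv g\, x$ has been derived, with $x$ not free in $f$, $g$, or the assumptions. Applying \textsc{abs} gives $(\lambda x.\, f\, x) \equiv (\lambda x.\, g\, x)$; the side condition of \textsc{abs}---that $x$ not occur free in the open assumptions---is precisely the hypothesis carried by the \ext\ rule we are simulating. Now $\eta$-\conv\ supplies $(\lambda x.\, f\, x) \equiv f$ and $(\lambda x.\, g\, x) \equiv g$, using that $x$ is free in neither $f$ nor $g$. Flipping the first of these with \Equiv\s\ and then chaining through the abstraction equality with two applications of \Equiv\T\ produces $f \equiv g$, as required.

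The only real content lies in the bookkeeping of the variable side conditions, and this is where I expect the main obstacle. In each direction one must check that the freshness and non-occurrence constraints of the rule being \emph{applied} are met by those assumed in the rule being \emph{derived}. For the derivation of \ext\ this matching is immediate, since the non-freeness condition on \textsc{abs} coincides with the one built into \ext; for the derivation of $\eta$-\conv\ one must additionally ensure the auxiliary variable $z$ introduced for \ext\ is chosen so as to avoid capture, which is always possible. No further difficulties arise.
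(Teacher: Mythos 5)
Your proposal is correct and follows essentially the same route as the paper: one application of \ext\ on top of $\beta$-\conv\ for the first direction, and \textsc{abs} combined with two uses of $\eta$-\conv, \Equiv\s, and \Equiv\T\ for the second. The only cosmetic difference is that you introduce a fresh variable $z$ before applying \ext, whereas the paper reuses $x$ itself (which is legitimate, since $x$ is bound in $\lambda x.\, f\, x$ and assumed not free in $f$); both handle the side conditions correctly.
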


\begin{proof}
Assume \code{x} is not free in \code{f}.
Then we have the following derivation:
\smallskip

\[
\begin{prooftree}
    \infer0[$\beta$-\conv]{\code{(\(\lambda\)x. f x) x \Equiv\ f x}}
    \infer1[\ext]{\code{(\(\lambda\)x. f x) \Equiv\ f}}
\end{prooftree}
\]
\smallskip

Conversely, assume \code{x} is not free in \code{f}, \code{g}, or the assumptions of the derivation of
\begin{pure}
f x \Equiv\ g x\textrm{.}
\end{pure}
Then we have

\[
\begin{prooftree}
    \infer0[$\eta$-\conv]{\code{(\(\lambda\)x. f x) \Equiv\ f}}
    \infer1[\Equiv\s]{\code{f \Equiv\ (\(\lambda\)x. f x)}}

    \hypo{\code{f x \Equiv\ g x}}
    \infer1[\textsc{abs}]{\code{(\(\lambda\)x. f x) \Equiv\ (\(\lambda\)x. g x)}}

    \infer2[\Equiv\T]{\code{f \Equiv\ (\(\lambda\)x. g x)}}

    \infer0[$\eta$-\conv]{\code{(\(\lambda\)x. g x) \Equiv\ g}}

    \infer2[\Equiv\T]{\code{f \Equiv\ g}}
\end{prooftree}
\]
\end{proof}

Pure implements \ext\ as a rule derived from $\eta$-\conv.

\section{The Pure system}

We have so far mainly kept the presentation of Pure theoretical, however it is unavoidable that we will eventually have to consider details of implementation.
In this section we begin to explain how the theory of the previous sections is implemented and used in the Isabelle system.
We have attempted, as best as possible, to be clear in our explanations, but there is no substitute for seeing a concrete example for many of the issues discussed here.
Hence it may be helpful to return to this section after having read Chapter \ref{ch:implementation}.

\subsection{Pure and ML}

The Isabelle system is written in the statically-typed programming language ML, so the terms and types of Pure are themselves implemented as typed ML terms.
Hence there is a distinction between the types \code{T} and terms \code{t} of Pure on the object-level and the types \ml{T} and terms \ml{t} of ML on the meta-level, which is separate from the object/meta distinction between object logics and Pure.

\subsection{Axioms and theorems}

Pure terms of type \code{prop} encompass all formulas, including those that are not necessarily true or provable.
To distinguish those formulas that should be considered true judgments, Isabelle defines ML terms of a special type \ml{thm}, which wrap formulas \code{p} with additional information telling the system to consider \code{p} to be proven.
These \emph{ML theorems} are introduced into an Isabelle proof session in one of two ways: either axiomatically with the Pure command \code{axiomatization}, or constructed from other ML theorems using Pure's proof procedures.

Most of the logical rules in the previous section are defined directly as ML theorems in the Isabelle source files \cite{drule.ml, thm.ml}, object logics introduce additional axioms.

\subsection{Free versus schematic variables}

Before explaining how logical rules are encoded and applied in Isabelle, we need to clarify the distinction between two different kinds of variable in the system.

Recall from Section \ref{sec:types-and-terms} that the Pure terms are classified syntactically into constants, lambda expressions, and variables.
In the actual Pure implementation, a further distinction is made between \emph{free} and \emph{schematic} variables.
Logically, these are the same, but internally they are represented differently: free variables are represented by ML terms with head \ml{Free}; schematic variables by ML terms with head \ml{Var}.
Functionally, during the proof process the Isabelle system is allowed to instantiate instances of schematic variables, but not free variables.
That is, schematic variables, written
\begin{pure}
?x \Dblcolon T\textrm{,}
\end{pure}
indicate ``placeholders'' in formulas which may be instantiated by the system with any Pure term of type \code{T}.
This instantiation occurs when the system performs unification.
On the other hand, a free variable \code{x \Dblcolon\ T} indicates an arbitrary but \emph{fixed} term of type \code{T}, which is not modified by unification.

\subsection{Unification}

\begin{definition}
Let \code{t} and \code{s} be Pure terms containing schematic variables \code{?x\sub{1}}, \textellipsis, \code{?x\sub{n}} and \code{?y\sub{1}}, \textellipsis, \code{?y\sub{m}} respectively.
A \textbf{higher-order unifier} of \code{t} and \code{s} is an assigment $\sigma$ of terms to schematic variables
\begin{gather*}
\code{?x\sub{1}} \coloneqq \code{a\sub{1}}, \dotsc, \code{?x\sub{n}} \coloneqq \code{a\sub{n}} \\
\code{?y\sub{1}} \coloneqq \code{b\sub{1}}, \dotsc, \code{?y\sub{m}} \coloneqq \code{b\sub{m}}
\end{gather*}
that respects types, such that
\[
\code{t}[\sigma] \colonequiv \code{t}[\code{a\sub{1}}, \dotsc, \code{a\sub{n}}/\code{?x\sub{1}}, \dotsc, \code{?x\sub{n}}]
\]
and
\[
\code{s}[\sigma] \colonequiv \code{s}[\code{b\sub{1}}, \dotsc, \code{b\sub{m}}/\code{?y\sub{1}}, \dotsc, \code{?y\sub{m}}]
\]
are syntactically identical modulo $\beta$-reduction.
\end{definition}

For example, the terms
\code{f (?x\sub{1} a)}
and
\code{?y\sub{1} (g ?y\sub{2})}
have a unifier
\[ \code{?x\sub{1}} \coloneqq \code{g},\ \code{?y\sub{1}} \coloneqq \code{f},\ \code{?y\sub{2}} \coloneqq \code{a}. \]
They also admit more complicated unifiers such as
\[ \code{?x\sub{1}} \coloneqq \code{(\(\lambda\)x. g b)},\ \code{?y\sub{1}} \coloneqq \code{f},\ \code{?y\sub{2}} \coloneqq \code{b}. \]
A unifier may leave schematic variables in terms uninstantiated, and may also occasionally introduce new schematic variables.

Isabelle performs higher-order unification using a modification of Huet's procedure \cite{huet75}, and may occasionally fail to return a unifier even if one exists.
This incompleteness is due to the higher-order nature of the problem, in particular the pair of terms
\begin{pure}
?f x \textrm{and} ?g x
\end{pure}
admit infinitely many unifiers, e.g. $\code{?f} = \code{?g} \coloneqq \lambda\code{x. ?h}$.
Such a pair is known as a \emph{flex-flex pair}.
If Isabelle encounters a flex-flex pair, it does not make a choice of unifier and instead separately records the constraint to be handled manually.
We will not discuss the details of the unification process any further, interested readers may refer to Sections 3.1 and 7.4 of \cite{old-intro} for more information.

\section{Reasoning and proof in Pure}

In this section we continue considering implementation details specifically pertaining to the proof process in Isabelle.

\subsection{Inference rules}

Much of Pure's power and flexibility comes from its ability to encode natural deduction rules as formulas.
Hence it can implement the inference rules of many object logics.

The basic pattern is the following.
The Pure implication \Implies\ expresses the notion of \emph{entailment} or \emph{derivability}.
An inference rule schema

\[
\begin{prooftree}
    \hypo{[\psi^{i_1}]}
    \ellipsis{}{\phi^{i_1}}
    \hypo{\cdots}
    \hypo{[\psi^{i_m}]}
    \ellipsis{}{\phi^{i_m}}
    \hypo{\phi^{j_1}}
    \hypo{\cdots}
    \hypo{\phi^{j_n}}
    \infer6[$R$]{\phi}
\end{prooftree}
\]
can be encoded by the formula
\begin{multline} \label{eq:pure-rule-pattern}
\code{
\llb{}\All{}x\subsup{i\sub{1}}{1} \Dotsc\ x\subsup{i\sub{1}}{k\sub{1}}. q\sub{i\sub{1}} \Implies\ p\sub{i\sub{1}};
\Dotsc;
\All{}x\subsup{i\sub{m}}{1} \Dotsc\ x\subsup{i\sub{m}}{k\sub{m}}. q\sub{i\sub{m}} \Implies\ p\sub{i\sub{m}};} \\
\code{\All{}x\subsup{j\sub{1}}{1} \Dotsc\ x\subsup{j\sub{1}}{l\sub{1}}. p\sub{j\sub{1}}; \Dotsc; \All{}x\subsup{j\sub{n}}{1} \Dotsc\ x\subsup{j\sub{n}}{l\sub{n}}. p\sub{j\sub{n}}\rrb\ \Implies\ p},
\end{multline}
where \code{p} is the encoding of $\phi$, the \code{q\sub{\(\ast\)}} and \code{p\sub{\(\ast\)}} are encodings of the respective formula schemas $\psi^\ast$ and $\phi^\ast$ as Pure formulas, and each universal quantifier \All\ optionally binds occurrences of free variables \code{x\subsup{\(\ast\)}{\(\ast\)}} in its scope.
Here we have used the Pure notation
\begin{pure}
\llb{}p\sub{1}; \Dotsc; p\sub{n}\rrb \Implies p
\end{pure}
which abbreviates
\begin{pure}
p\sub{1} \Implies \Dotsb \Implies p\sub{n} \Implies p\textrm{,}
\end{pure}
where \Implies\ is right-associative.

That is, a formula such as \eqref{eq:pure-rule-pattern} encodes an inference rule in which the rightmost implication represents the final ``line'' of the rule, and the implications inside the double brackets \code{\llb\ \rrb} represent subderivations, whose premises are to be discharged.
This structure may furthermore be nested.
In the jargon of proof theory, we say that an inference rule is represented by a Horn clause of hereditary Harrop formulas in Pure.

Inference rules having no premises simply have their conclusion encoded as axioms, so

\[
\begin{prooftree}
    \infer0[$R$]{\phi}
\end{prooftree}
\]
is simply given by an encoding \code{p} of $\phi$.

From now on we will usually use the term ``inference rule'' to refer both to abstract inference rules as defined in Definition \ref{def:deductive-system}, as well as to their encoding as Pure formulas, and trust the context to disambiguate which sense we mean.
In cases where the distinction matters we will be very explicit.

\subsection{Goals and backward proof}

Consider a deductive system $\mathcal{D}$.
In a derivation $D$ of $\varphi$ from $\Gamma$ in $\mathcal{D}$, one starts with the formulas of $\Gamma$ and iteratively applies instances of appropriate inference rules in order to finally obtain $\varphi$, in this way constructing a proof tree from the top down.
In Isabelle, this is known as \emph{forward proof}, where one works forward from assumptions, applying inference rule transformations to finally obtain the conclusion.

We can also read $D$ from the bottom up, starting with the root $\varphi$ which is to be shown.
Now we read the application of inference rule instances \emph{backwards}: a derivation step

\[
\begin{prooftree}
    \hypo{\phi^1}
    \hypo{\cdots}
    \hypo{\phi^n}
    \infer3[$R$]{\phi}
\end{prooftree}
\]
in $D$ expresses that in order to derive the \emph{goal} $\phi$, it suffices to derive the \emph{subgoals} $\phi^1, \dotsc, \phi^n$.
This forms the basis of Isabelle's notion of \emph{backward proof}, which we now illustrate with an example.

Consider the derivation given in the proof of Proposition \ref{prop:eta-ext-equiv},

\[
\begin{prooftree}
    \infer0[$\eta$-\conv]{\code{(\(\lambda\)x. f x) \Equiv\ f}}
    \infer1[\Equiv\s]{\code{f \Equiv\ (\(\lambda\)x. f x)}}

    \hypo{\code{f x \Equiv\ g x}}
    \infer1[\textsc{abs}]{\code{(\(\lambda\)x. f x) \Equiv\ (\(\lambda\)x. g x)}}

    \infer2[\Equiv\T]{\code{f \Equiv\ (\(\lambda\)x. g x)}}

    \infer0[$\eta$-\conv]{\code{(\(\lambda\)x. g x) \Equiv\ g}}

    \infer2[\Equiv\T]{\code{f \Equiv\ g}}
\end{prooftree}
\]
where the aim is to show that $\code{(f x \Equiv\ g x)} \turnstile_{Pure_\eta} \code{(f \Equiv\ g)}$.
Starting from the goal
\begin{equation} \label{eq:goal}
\code{f \Equiv\ g}
\end{equation}
and applying an appropriate instance of the rule \Equiv\T, we obtain the two subgoals
\begin{gather}
\code{f \Equiv\ (\(\lambda\)x. f x)}, \label{eq:subgoal1} \\
\code{(\(\lambda\)x. g x) \Equiv\ g}, \label{eq:subgoal2}
\end{gather}
which we can consider as replacing the original goal, in the sense that if we can derive both \eqref{eq:subgoal1} and \eqref{eq:subgoal2}, then we can also derive \eqref{eq:goal}.

We then repeat this process for each of these subgoals.
The second subgoal \code{(\(\lambda\)x. g x) \Equiv\ g} is the conclusion of an instance of the rule $\eta$-\conv, which has no premises.
Hence applying $\eta$-\conv\ yields no new subgoals in place of \eqref{eq:subgoal2}, that is to say, in order to derive \eqref{eq:subgoal2} it suffices to derive the empty set.
Thus \eqref{eq:subgoal2} is trivially proved, befitting its status as an axiom.

This leaves subgoal \eqref{eq:subgoal1}.
Again we apply an appropriate instance of \Equiv\T\ to replace \eqref{eq:subgoal1} by new subgoals
\begin{gather}
\code{f \Equiv\ (\(\lambda\)x. f x)}, \label{eq:subgoal3} \\
\code{(\(\lambda\)x. f x) \Equiv\ (\(\lambda\)x. g x)}, \label{eq:subgoal4}
\end{gather}
the first of which is then transformed to the trivial goal set by applying \Equiv\s\ and then $\eta$-\conv.
To the second new subgoal \eqref{eq:subgoal4} we apply rule \textsc{abs}, yielding the subgoal
\begin{pure}
f x \Equiv g x\textrm{.}
\end{pure}
This is the single assumption of the derivation we aim to show, and hence we are done.

\begin{remark}
The example above illustrates a backward proof that
\[ \code{(f x \Equiv\ g x)} \turnstile_{Pure_\eta} \code{(f \Equiv\ g)}, \]
subject to the relevant side conditions.
In practice, we would normally perform an initial application of the Pure rule \Implies\I\ in order to show, equivalently, that

\begin{pure}
\(\turnstile\)\sub{\(Pure\)\sub{\(\eta\)}} (f x \Equiv g x) \Implies (f \Equiv g)\textrm{.}
\end{pure}
\end{remark}

\subsection{Proof states, lifting, and resolution}

Backward proof is a basic way Pure constructs new ML theorems from existing ones.
We now describe its implementation.

\begin{definition}
A \textbf{proof state} is a Pure formula of the form
\begin{equation} \label{eq:proof-state}
\code{
\llb{}p\sub{1}; \Dotsc; p\sub{n}\rrb\ \Implies\ p
}
\end{equation}
where the \code{p\sub{i}} are formulas known as the \textbf{subgoals} of the proof state, and \code{p} is a formula known as the \textbf{goal}.
\end{definition}

A proof state \eqref{eq:proof-state} expresses that in order to prove the goal \code{p}, it suffices to prove the subgoals \code{p\sub{1}}, \textellipsis, \code{p\sub{n}}, where \code{n} may be zero.
The goal \code{p} may itself be an implication.
Note that proof states and inference rules have the same syntactic form and dual interpretations.

Proof states are transformed using inference rules by a method called \emph{lifted resolution}.
In the following definitions, let
% \begin{pure}
% P: \llb{}p\sub{1}; \Dotsc; p\sub{n}\rrb \Implies p
% \end{pure}
% be a proof state, and
\begin{pure}
R: \llb{}r\sub{1}; \Dotsc; r\sub{m}\rrb \Implies r
\end{pure}
be an inference rule encoded in the form \eqref{eq:pure-rule-pattern}.

\begin{definition}
Let $\code{q\sub{1}}, \dotsc, \code{q\sub{k}}$ be formulas.
The \textbf{lift of \code{R} over formulas \code{q\sub{1}}, \textellipsis, \code{q\sub{k}}} is the formula
\begin{multline} \label{rule:lift-formulas}
\code{
\llb\llb{}q\sub{1}; \Dotsc; q\sub{k}\rrb\ \Implies\ r\sub{1}; \Dotsc; \llb{}q\sub{1}; \Dotsc; q\sub{k}\rrb\ \Implies\ r\sub{m}\rrb\ \Implies} \\
\code{(\llb{}q\sub{1}; \Dotsc; q\sub{k}\rrb\ \Implies\ r)\textrm{.}
} \tag{\textsc{lift\sub{\textsc{forms}}}(\texttt{R})}
\end{multline}
\end{definition}

It is easily shown that \textsc{lift\sub{\textsc{forms}}}(\code{R}) is in fact a derivable inference rule in the theory $Pure + \code{R}$ consisting of the inference rules of Section \ref{sec:pure-logical-rules} together with \code{R}.

\begin{definition}
Let \code{x\sups{1}}, \textellipsis, \code{x\sups{k}} be terms.
The \textbf{lift of \code{R} over parameters \code{x\sups{1}}, \textellipsis, \code{x\sups{k}}} is the formula
\begin{equation} \label{rule:lift-parameters}
\code{
\llb{}\All{}x\sups{1} \Dotsc\ x\sups{k}.\ r\subsup{1}{\(\ast\)};
\Dotsc;
\All{}x\sups{1} \Dotsc\ x\sups{k}.\ r\subsup{m}{\(\ast\)}\rrb\ 
\Implies\ 
\All{}x\sups{1} \Dotsc\ x\sups{k}.\ r\sups{\(\ast\)}
}, \tag{\textsc{lift\sub{\textsc{params}}}(\texttt{R})}
\end{equation}
where \code{s\sups{$\ast$}} is the formula obtained from \code{s} by replacing every instance of a schematic variable \code{?y} in \code{s} by the schematic term
\begin{pure}
(?y x\sups{1} \Dotsc x\sups{k})\textrm{.}
\end{pure}
\end{definition}

Again, one can show (Section 6.1, \cite{paulson-foundation}) that \textsc{lift\sub{\textsc{params}}}(\texttt{R}) is a derivable rule in $Pure + \code{R}$.

As an illustration consider the following encoding of the $\vee$-introduction rule of intuitionistic propositional logic
\begin{pure}
Or_intro: ?p \Implies (?p \(\vee\) ?q)\textrm{,}
\end{pure}
which corresponds to
\[
\begin{prooftree}
    \hypo{p}
    \infer1[$\vee$\I]{p \vee q}
\end{prooftree}.
\]
Lifting \code{Or\_intro} over the formula \code{r} and the parameter \code{x} yields
\begin{pure}
(\All{}x. r \Implies (?p x)) \Implies (\All{}x. r \Implies ((?p x) \(\vee\) (?q x)))\textrm{,}
\end{pure}
which expresses the derivable inference rule
\[
\begin{prooftree}
    \hypo{[r]}
    \ellipsis{}{p(x)}
    \infer1[]{r \implies (p(x) \vee q(x))}
\end{prooftree}
\]
of intuitionistic predicate logic, where $x$ is arbitrary.
Note that we lifted over the \emph{free} formula \code{r} and not the \emph{schematic} formula \code{?r}, so we do not replace \code{r} by \code{(r x)}.

The derivability of the two lifting rules means that lifting is sound:

\begin{theorem} \label{thm:lifting-sound}
The lift of a Pure theorem over arbitrary formulas and parameters is a Pure theorem.
Applying a lifted Pure theorem to Pure theorems yields a Pure theorem.
\end{theorem}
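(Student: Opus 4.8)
The plan is to deduce both assertions from the two facts recorded just above---that \textsc{lift\sub{\textsc{forms}}}(\code{R}) and \textsc{lift\sub{\textsc{params}}}(\code{R}) are derivable rules in $Pure + \code{R}$---together with the remark that adjoining a provable formula to a theory leaves its set of theorems unchanged. For the first assertion, let \code{R} be a Pure theorem; regarded as a nullary rule it is derivable, hence admissible, so every theorem of $Pure + \code{R}$ is already a theorem of $Pure$. Each lifting rule has the form \code{\llb{}r\sub{1}; \Dotsc; r\sub{m}\rrb\ \Implies\ r}, and its derivability means precisely that, with its premises taken as assumptions, its conclusion is derivable in $Pure + \code{R}$. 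Discharging those assumptions by repeated \Implies\I then exhibits the lifting formula itself as a theorem of $Pure + \code{R}$, and therefore, by conservativity, as a Pure theorem. The lift over both formulas \emph{and} parameters is the composite of the two constructions, each factor of which preserves theoremhood, so the composite does as well.

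For the second assertion, a lifted Pure theorem is a formula \code{\llb{}p\sub{1}; \Dotsc; p\sub{n}\rrb\ \Implies\ p}, possibly beneath a prefix of universal quantifiers contributed by the parameter lift. ``Applying'' it to Pure theorems \code{p\sub{1}}, \dots, \code{p\sub{n}} amounts to resolution: one instantiates the leading quantifiers and the schematic variables by \All\E until the premises match the supplied theorems, and then strips the premises one at a time by \Implies\E. Since \All\E and \Implies\E are inference rules of Pure and the Pure theorems are closed under its inference rules---placing derivations of the premises of a rule instance above that instance yields a derivation of its conclusion (Definition \ref{def:derivation})---the resulting \code{p} is again a Pure theorem. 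In the vocabulary of Proposition \ref{prop:admissible-correct}, this is just the correctness of the rules involved.

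Because the derivability of the two lifting constructions is assumed, the theorem is close to a corollary, and the step needing the most care is the bridge between the two senses of ``rule''. The premises above assert derivability as \emph{inference rules}---the existence of a derivation from hypotheses---whereas the theorem concerns \emph{theorems}, i.e.\ provable formulas; crossing this gap is exactly the deduction-theorem behaviour of \Implies\I. One must also keep track of schematic variables: a lifted formula carries schematics (and, for the parameter lift, the substitution of each \code{?y} by \code{(?y x\sups{1} \Dotsc x\sups{k})}), and theoremhood of the schematic formula must legitimately entail theoremhood of every instance reached by \All\E. The genuinely delicate side conditions---chiefly the eigenvariable restriction on \All\I, that the lifted parameters occur free in no undischarged assumption---live inside the cited proofs that \textsc{lift\sub{\textsc{forms}}}(\code{R}) and \textsc{lift\sub{\textsc{params}}}(\code{R}) are derivable, rather than in the present deduction; once the nested hereditary Harrop structure of \eqref{eq:pure-rule-pattern} is unwound, they are met by construction.
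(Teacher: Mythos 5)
Your proposal is correct, and its first half is essentially the paper's own argument: the lifting rules are derivable as abstract rules in $Pure + \code{R} = Pure$ (you justify the equality via admissibility of \code{R} viewed as a nullary rule, which the paper leaves implicit), and discharging their premises with \Implies\I\ converts derivability-as-a-rule into theoremhood of the lifted formula; your remark that the combined lift is a composite of the two constructions is a small point the paper glosses over. Where you genuinely diverge is the second assertion. The paper does not reuse the first assertion there: it argues instead that the lifted rule is derivable as an abstract inference rule in the deductive core $\mathcal{D}_{Pure}$, i.e.\ without any discharge, hence admissible, hence correct by Proposition \ref{prop:admissible-correct}, so applying it to theorems yields a theorem. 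You instead take the lifted formula---already a theorem by part one---and apply \Implies\E\ (with \All\E\ for instantiation of schematics) to the supplied theorems, relying on closure of theoremhood under non-discharging rules. Both routes are sound. Yours is more economical, reducing ``application'' to modus ponens on an established theorem; the paper's keeps the two assertions logically independent and makes explicit that correctness is a notion living in the Hilbert-style core, which is precisely what licenses the appeal to Proposition \ref{prop:admissible-correct}. One small inaccuracy to fix: the parameter lift does not place a quantifier prefix over the whole implication; in \textsc{lift\sub{\textsc{params}}}(\code{R}) the universal quantifiers are distributed over each premise and over the conclusion separately, so no leading quantifiers need to be stripped before \Implies\E\ can be applied---this does not affect the substance of your argument.
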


\begin{proof}
Let \code{R} be a Pure theorem.
Then the lifting rules \textsc{lift\sub{\textsc{forms}}}(\code{R}) and \textsc{lift\sub{\textsc{params}}}(\code{R}) are derivable as abstract inference rules in $Pure + \code{R} = Pure$.
By discharging the premises of their derivations using \Implies\I, we see that they are in fact derivable as concrete theorems of Pure.
Hence the lift of a Pure theorem over arbitrary formulas and parameters is also a Pure theorem.

The lift of a Pure theorem is in fact derivable as an abstract inference rule in the deductive core $\mathcal{D}_{Pure}$ of Pure, i.e.\ it is derivable using only the inference rules that do not discharge assumptions.
Hence it is admissible, and by Proposition \ref{prop:admissible-correct} it is also correct.
Thus applying it to theorems yields a theorem.
\end{proof}

The Pure logical rules and any additional inference rules defined by object logics are theorems of Pure.
Theorem \ref{thm:lifting-sound} then says that applying lifted inference rules to Pure theorems yields a Pure theorem.
To implement backward proof we need one final definition:

\begin{definition}
Let
\begin{equation} \label{eq:proof-state}
\code{
P: \llb{}p\sub{1}; \Dotsc; p\sub{n}\rrb\ \Implies\ p
}
\end{equation}
be a proof state, and
\begin{equation} \label{eq:inf-rule}
\code{
R: \llb{}r\sub{1}; \Dotsc; r\sub{m}\rrb\ \Implies\ r
}
\end{equation}
an inference rule.

Suppose there is a subgoal \code{p\sub{i}} of \code{P}, a lift
\begin{equation} \label{eq:lifted-rule}
\code{
R\sups{\(\uparrow\)}: \llb{}r\subsup{1}{\(\uparrow\)}; \Dotsc; r\subsup{m}{\(\uparrow\)}\rrb\ \Implies\ r\sups{\(\uparrow\)}
}
\end{equation}
of \code{R} over formulas \code{q\sub{1}}, \textellipsis, \code{q\sub{k}} and parameters \code{x\sups{1}}, \textellipsis, \code{x\sups{l}}, and a unifier $\sigma$ of \code{r\sups{\(\uparrow\)}} and \code{p\sub{i}}.
Then the \textbf{resolvent of \code{P} with \code{R} with respect to subgoal \code{p\sub{i}}} is the formula
\begin{equation} \label{eq:resolvent}
\code{
(\llb{}p\sub{1}; \Dotsc; p\sub{i-1}; r\subsup{1}{\(\uparrow\)}; \Dotsc; r\subsup{m}{\(\uparrow\)}; p\sub{i+1}; \Dotsc; p\sub{n}\rrb\ \Implies\ p)\([\sigma]\)
}
\end{equation}
\end{definition}

The resolvent of \code{P} with \code{R} is a new proof state.
In words, it is formed by finding a subgoal \code{p\sub{i}} with which the lifted conclusion of \code{R} unifies, and then replacing \code{p\sub{i}} with new subgoals given by the lifted hypotheses of \code{R} and propagating the assignments given by the unifier throughout the new proof state.

Given a rule \code{R} and a proof state \code{P} as in \eqref{eq:proof-state} and \eqref{eq:inf-rule}, the process of finding a resolvent of \code{P} with \code{R} is known as \textbf{resolution}.
It can be encapsulated by the following rule, where \code{P\sub{RES}} is as given in \eqref{eq:resolvent}:
\[
\begin{prooftree}
    \hypo{\code{R}}
    \hypo{\code{P}}
    \infer2[\res]{\code{P\sub{RES}}}
\end{prooftree}
\]
which holds whenever there is a unifier $\sigma$ of some subgoal \code{p\sub{i}} of \code{P} with the conclusion of \code{R\sups{\(\uparrow\)}} lifted over the premises and parameters of \code{p\sub{i}}.

\begin{theorem}
The inference rule \res\ is correct in Pure.
\end{theorem}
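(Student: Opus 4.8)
The plan is to establish correctness directly from its definition: assuming that both the inference rule \code{R} and the proof state \code{P} are theorems of Pure, I will construct a derivation witnessing that the resolvent \code{P\sub{RES}} of \eqref{eq:resolvent} is itself a Pure theorem.

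First I would move from \code{R} to the lifted rule \code{R\sups{\(\uparrow\)}} of \eqref{eq:lifted-rule}. Since \code{R} is a theorem, Theorem \ref{thm:lifting-sound} guarantees that its lift \code{R\sups{\(\uparrow\)}} over the formulas \code{q\sub{1}}, \dots, \code{q\sub{k}} and parameters \code{x\sups{1}}, \dots, \code{x\sups{l}} is again a theorem. I would then apply the unifier $\sigma$. As $\sigma$ only assigns terms to schematic variables, instantiation preserves theoremhood---it amounts to $\bigwedge$-elimination on the implicitly generalized schematic variables---so both \code{R\sups{\(\uparrow\)}}$[\sigma]$ and \code{P}$[\sigma]$ remain theorems. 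By the defining property of a unifier, \code{r\sups{\(\uparrow\)}}$[\sigma]$ and \code{p\sub{i}}$[\sigma]$ are syntactically identical modulo $\beta$-reduction; using a $\beta$-\conv\ step where necessary, I may treat them as the same Pure formula.

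The core of the argument is then a cut in Pure's natural deduction calculus. Reading \code{P\sub{RES}} backwards, I would assume its antecedents \code{p\sub{1}}$[\sigma]$, \dots, \code{p\sub{i-1}}$[\sigma]$, \code{r\subsup{1}{\(\uparrow\)}}$[\sigma]$, \dots, \code{r\subsup{m}{\(\uparrow\)}}$[\sigma]$, \code{p\sub{i+1}}$[\sigma]$, \dots, \code{p\sub{n}}$[\sigma]$. Applying \Implies\E\ repeatedly to \code{R\sups{\(\uparrow\)}}$[\sigma]$ against the assumed hypotheses \code{r\subsup{1}{\(\uparrow\)}}$[\sigma]$, \dots, \code{r\subsup{m}{\(\uparrow\)}}$[\sigma]$ yields \code{r\sups{\(\uparrow\)}}$[\sigma]$, which is exactly \code{p\sub{i}}$[\sigma]$. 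At this point every one of \code{p\sub{1}}$[\sigma]$, \dots, \code{p\sub{n}}$[\sigma]$ is available---those with index $\neq i$ among the assumptions, and \code{p\sub{i}}$[\sigma]$ just derived---so a second chain of \Implies\E\ against \code{P}$[\sigma]$ produces the goal \code{p}$[\sigma]$. Discharging the assumed antecedents with \Implies\I\ delivers precisely \code{P\sub{RES}}.

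Since the two premises \code{R} and \code{P} are assumed to be theorems and every assumption introduced along the way is discharged, this derivation uses only the rules of Pure and establishes \code{P\sub{RES}} as a theorem; this is exactly correctness of \res. (The same derivation shows \res\ derivable, whence admissible and, by Proposition \ref{prop:admissible-correct}, correct.) I expect the only real difficulty to be bookkeeping: one must track the lifting and the unifier carefully enough to confirm that the lifted conclusion \code{r\sups{\(\uparrow\)}} matches the chosen subgoal \code{p\sub{i}} up to $\beta$ after $\sigma$ is applied, since it is precisely this match at \code{p\sub{i}} that allows the two implication chains to be spliced.
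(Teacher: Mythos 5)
Your proposal is correct and follows essentially the same route as the paper's own proof: lift \code{R} via Theorem \ref{thm:lifting-sound}, instantiate by $\sigma$ (preserving theoremhood), derive \code{p}$[\sigma]$ from the antecedents of \code{P\sub{RES}} by splicing the two implication chains at \code{p\sub{i}}$[\sigma]$, and discharge with \Implies\I. Your explicit handling of the ``identical modulo $\beta$-reduction'' point via $\beta$-\conv\ is a small refinement the paper glosses over, but the argument is the same.
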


\begin{proof}
Assume that \code{R} and \code{P} are theorems of Pure.
By Theorem \ref{thm:lifting-sound}, \code{R\sups{$\uparrow$}} is a theorem.
Also, \code{R\sups{$\uparrow$}}$[\sigma]$ and \code{P}$[\sigma]$ are theorems for any instantiation $\sigma$ of schematic variables.

Consider \code{P\sub{RES}} as an abstract inference rule
\[
\begin{prooftree}
    \hypo{\code{p\sub{1}}[\sigma]}
    \hypo{\dotsb}
    \hypo{\code{p\sub{i-1}}[\sigma]}
    \hypo{\code{r\subsup{1}{\(\uparrow\)}}[\sigma]}
    \hypo{\dotsb}
    \hypo{\code{r\subsup{m}{\(\uparrow\)}}[\sigma]}
    \hypo{\code{p\sub{i+1}}[\sigma]}
    \hypo{\dotsb}
    \hypo{\code{p\sub{n}}[\sigma]}
    \infer9[]{\code{p}[\sigma]}
\end{prooftree}.
\]
This rule is derivable in $Pure + \code{R\sups{$\uparrow$}}[\sigma] + \code{P}[\sigma]$ since from \code{r\subsup{1}{\(\uparrow\)}}$[\sigma]$, \textellipsis, \code{r\subsup{m}{\(\uparrow\)}}$[\sigma]$ we can derive \code{r\sups{$\uparrow$}}$[\sigma]$, which is syntactically identical to \code{p\sub{i}}$[\sigma]$.
In turn, together with the other premises \code{p\sub{j}}$[\sigma]$ we can then derive \code{p}$[\sigma]$.
Finally, by applying \Implies\I\ we obtain a derivation
\[ \turnstile_{Pure + \code{R\sups{$\uparrow$}}[\sigma] + \code{P}[\sigma]} \code{P\sub{RES}}. \]
But $Pure + \code{R\sups{$\uparrow$}}[\sigma] + \code{P}[\sigma] = Pure$ since \code{R\sups{$\uparrow$}}$[\sigma]$ and \code{P}$[\sigma]$ are theorems.
Hence \code{P\sub{RES}} is a Pure theorem.
\end{proof}

\subsection{Proof by resolution}

The results of the last section give us a rigorous theoretical basis for backward proof in Pure.

Assume we are working in an extension
\[ Pure + \mathcal{R} \]
of $Pure$, with additional inference rules $\mathcal{R}$ encoded as Pure formulas by some object logic.
To prove a formula \code{p}, we begin with the proof state
\begin{pure}
p \Implies p\textrm{,}
\end{pure}
which is trivially a Pure theorem.
We transform the proof state by applying instances of \res\ with appropriate inference rules: this top-down proof using resolution corresponds to building a proof tree using the inference rules from the bottom up.
Searching for a unifier in each resolution step corresponds to choosing a suitable inference rule instance to apply.

If an instance of an axiom \code{R} occurs as a subgoal \code{p\sub{i}} of a proof state, unifying with \code{R} with respect to \code{p\sub{i}} will result in \code{p\sub{i}} being replaced with no new subgoals, effectively removing a subgoal from the proof state.
Our aim is thus to transform a proof state in order to eliminate subgoals in this way.
Once we have succeeded in removing all subgoals, the proof state will read
\begin{pure}
\Implies p\([\sigma]\)
\end{pure}
or trivially, \code{p}$[\sigma]$, where there may be instantiations of schematic variables in \code{p} from the unification performed during resolution.
From the results of the previous section, we then have that \code{p}$[\sigma]$ is a theorem of $Pure + \mathcal{R}$.

\subsection{Proof by assumption}

Resolution is one of Isabelle's many methods of proof, which are known as \emph{tactics}.
We briefly mention another tactic here, called \emph{assumption}.
It is trivial but often necessary, and proves formulas of the form
\begin{pure}
\llb{}p\sub{1}; \Dotsc; p\sub{n}\rrb \Implies p
\end{pure}
where one of the premises \code{p\sub{i}} unifies with the conclusion \code{p}.
It is an easy exercise to show that assumption can be formulated as a derivable inference rule in Pure.

There are numerous examples of lifting and proof by resolution and assumption in Sections 5.3 and 5.4 of \cite{old-intro}, as well as in \cite{paulson-foundation}.

\chapter{Implementing $HoTT$ in Pure} \label{ch:implementation}

Now that we have a basic understanding of the Isabelle/Pure framework, we are ready to implement $HoTT$ as an object logic inside Pure.
We will call the theory of our implementation $HoTT_P$.\footnote{Other names that were considered but rejected were $PHo$ (for being too culinary) and $PHTT$  (too dismissive).
A personal favorite of the author's was $HoTTiP$, aka ``\textit{Ho}motopy \textit{T}ype \textit{T}heory \textit{i}n \textit{P}ure'', which in his admittedly biased opinion reflects the promise of this area of research.}
Currently the implementation itself is provided as an Isabelle object logic library \emph{Isabelle/HoTT} \cite{isabelle-hott}.
In this chapter we provide a high-level overview of the theory and implementation of $HoTT_P$, leaving the details to the description text contained within the library files.

\section{Foundational setup}

\subsection{Object types and terms}

We begin by defining a new basic Pure type
\begin{pure}
Term\textrm{,}
\end{pure}
whose terms will be the object-level types and terms of $HoTT_P$.
In formulations of type theory without universes or with Tarski-style universes, one would usually define distinct metatypes \code{i} and \code{t} for the object terms and object types respectively.
This is done for example in $CTT$, the Pure implementation of Martin-L\"{o}f constructive type theory.
However, when we use Russell-style universes, the universe types may act as terms in some judgments and as types in others.
This forces us to treat object terms and types uniformly, as instances of the same sort of metaterm.

\begin{remark}
One can view this setup as a triply-sorted pure type system (PTS) \cite{nlab-pts}, where the sorts are the sort of object-level terms $term$, the sort of object-level types $type$, and \code{Term}.
The axioms are then
\[ term \colon \code{Term} \quad \text{and} \quad type \colon \code{Term}. \]
We will however not use the PTS framework in this thesis.
\end{remark}

\subsection{Judgments} \label{sec:judgment-forms}

Next we need to implement the judgments of $HoTT$.
Judgments of object logics are implemented in Pure by defining a function term that takes arguments of judgment forms and wraps them to return a Pure formula.
In an Isabelle theory session we define a new Pure constant
\begin{pure}
has_type \Dblcolon [Term, Term] \Map prop
\end{pure}
that takes two arguments of type \code{Term} and returns a Pure formula.
In this way, formulas
\begin{pure}
(has_type a A)
\end{pure}
now encode the $HoTT$ typing assignment $a \colon A$ and are able to be reasoned about using the proof methods discussed in Chapter \ref{ch:2}.
Instead of \code{(has\_type a A)}, we will write
\begin{pure}
a\Colon{}A\textrm{.}
\end{pure}
Note that the notation
\begin{pure}
[t\sub{1}, \Dotsc, t\sub{n}] \Map t
\end{pure}
abbreviates
\begin{pure}
t\sub{1} \Map \Dotsc \Map t\sub{n} \Map t\textrm{,}
\end{pure}
where \Map\ is right-associative.

Recall that there are three judgment forms of $HoTT$,
\[ \Gamma\ \ctx, \quad \Gamma \turnstile a \colon A, \quad \Gamma \turnstile a \equiv b \colon A, \]
all of which involve contexts $\Gamma$.
It does not appear to be straightforward to directly implement contexts in the Pure framework---a particular difficulty is the question of how variables should be assigned object-level types.
Hence we handle variable typing assignments differently in Pure, and encode the $HoTT$ judgments
\[ x_1 \colon A_1, \dotsc, x_n \colon A_n \turnstile a \colon A \]
and
\[ x_1 \colon A_1, \dotsc, x_n \colon A_n \turnstile a \equiv b \colon A \]
as the $HoTT_P$ formulas
\begin{multline} \label{eq:hottp-type-judgment}
\code{
\All{}x\sub{1} \Dotsc\ x\sub{n}. \llb{}x\sub{1}\Colon{}A\sub{1}; x\sub{2}\Colon{}(A\sub{1} x\sub{1}); \Dotsc; x\sub{n}\Colon{}(A\sub{n} x\sub{1} \Dotsc\ x\sub{n-1})\rrb\ \Implies
} \\
\code{
(a x\sub{1} \Dotsc\ x\sub{n})\Colon{}(A x\sub{1} \Dotsc\ x\sub{n})
}
\end{multline}
and
\begin{multline} \label{eq:hottp-eq-judgment}
\code{
\All{}x\sub{1} \Dotsc\ x\sub{n}. \llb{}x\sub{1}\Colon{}A\sub{1}; x\sub{2}\Colon{}(A\sub{1} x\sub{1}); \Dotsc; x\sub{n}\Colon{}(A\sub{n} x\sub{1} \Dotsc\ x\sub{n-1})\rrb\ \Implies
} \\
\code{
(a x\sub{1} \Dotsc\ x\sub{n}) \Equiv\ (b x\sub{1} \Dotsc\ x\sub{n})
}
\end{multline}
respectively.
Observe that whereas in $HoTT$ we treat the terms $a$, $A$, etc. as \emph{term expressions} which may contain \emph{free variables} $x_i$, in $HoTT_P$ we must consider them as \emph{function terms} \code{a}, \code{A}, etc. of the Pure framework, and explicitly express their dependency on the \emph{parameters} \code{x\sub{i}} (though this dependency is optional).

There is no equivalent in $HoTT_P$ of the context judgment
\[ \Gamma\ \ctx \]
of $HoTT$.
Its purpose in $HoTT$ is to ensure well-typedness of terms and well-formedness of dependent types when formulating the type formation and introduction rules; we will later accomplish the same effect using universal quantification and premises giving type constraints for parameters.
In some cases, this means we have to add premises to the type rules in $HoTT_P$ in order to maintain certain properties present in $HoTT$ that would be lost if we simply translated the rules ``na\"{i}vely''.

\subsection{Definitional equality}

We remark briefly on the implementation of definitional equality.

We use the Pure equality as the definitional equality in $HoTT_P$, and do not define a separate judgment function for it as we did for the typing judgment.
Definitional equality is typed in $HoTT$, but the Pure equality is untyped, and when we translate the definitional equality rules over into $HoTT_P$ as in \eqref{eq:hottp-eq-judgment} we simply drop the type annotation.
This means that in $HoTT_P$ it automatically holds that \code{a\Colon{}A} and \code{a \Equiv\ b} together imply \code{b\Colon{}A}, hence subject reduction automatically holds.

It turns out that from the Pure rules we can derive most of the rules governing definitional equality in $HoTT_P$, and we only have to explicitly axiomatize term congruence rules for some of the constructors for the dependent types.

An issue to be aware of is that the definitional equality of $HoTT$ does not have $\alpha$-conversion, which is instead handled on the meta-level.
However, the Pure equality does have rules for $\alpha$-conversion, thus our implementation conflates this aspect of the object- and meta-theory of $HoTT$.
Functionally, however, there is no difference.
Such issues are known and solutions have been considered before, we refer to \cite{tasistro93} for further reading.

\subsection{Universes}

To implement the universe hierarchy we first declare another basic Pure type \code{Ord} representing meta-level natural numbers, along with new defined constants
\begin{pure}
O \Dblcolon Ord\textrm{,}
S \Dblcolon Ord \Map Ord
\end{pure}
and the strict and non-strict order relations
\begin{pure}
\Lt \Dblcolon [Ord, Ord] \Map prop\textrm{,}
\Leq \Dblcolon [Ord, Ord] \Map prop\textrm{,}
\end{pure}
satisfying the axioms
\begin{equation} \label{eq:Ord-rules}
\begin{gathered}
\code{\All{}n. O \Lt\ (S n)\textrm{,}} \\
\code{\All{}n. O \Leq\ n\textrm{,}} \\
\code{\All{}n. n \Lt\ (S n)}\textrm{,} \\
\code{\All{}n. n \Leq\ (S n)}\textrm{,} \\
\code{\All{}m n. m \Lt\ n \Implies\ (S m) \Lt\ (S n)\textrm{,}} \\
\code{\All{}m n. m \Leq\ n \Implies\ (S m) \Leq\ (S n)\textrm{.}}
\end{gathered} \tag{\textit{Ord}}
\end{equation}
This is all we need to define the terms of universe hierarchy as values of the function term
\begin{pure}
U \Dblcolon Ord \Map Term\textrm{.}
\end{pure}
We give the rules governing the universes in the next section.

\section{Design issues}

\subsection{Analytic versus synthetic formulations}

We define the object-level types and terms as terms of the Pure framework.
In particular, the object-level typing judgment is implemented as a Pure term, and hence the entirety of $HoTT_P$ is contained ``separately'' inside the Pure metalogic.

This is in contrast to systems like Coq and Agda, there the types of $HoTT$ may be \emph{directly} encoded as types of the meta-framework.
This is because those systems are themselves based on extensions of Martin-L\"{o}f type theory, whereas the simple type theory that Pure is based on does not have dependent types and hence cannot natively express the $HoTT$ types.

This illustrates the distinction between the \emph{synthetic} and \emph{analytic} formulations \cite{nlab-lf, harper12} of a logic $\mathcal{L}$ inside a logical framework $\mathcal{F}$.
In the analytic approach we interpret the types and judgments of $\mathcal{L}$ directly using the semantics of $\mathcal{F}$, which usually entails using the types of $\mathcal{F}$ as types of $\mathcal{L}$.
This approach, however, requires the logics $\mathcal{L}$ and $\mathcal{F}$ to be similar.
In the synthetic approach, one simply uses the syntax of $\mathcal{F}$ to encode the entities of $\mathcal{L}$, as we do for $HoTT$ inside Pure, essentially because we have no other choice.

\subsection{Monomorphism versus polymorphism}

The theory of $HoTT$ is \emph{monomorphic with type annotations}, meaning that terms are decorated with the types of their arguments, and are not polymorphic---each of their arguments can only come from one specific type (these are sometimes called \emph{Church-style} terms).
For example, we annotate lambda terms $\lambda(x \colon A).\,x$ with the type $A$ of their argument, and dependent eliminators with the type family $C$ over which they are constructed, e.g. the first argument of
\[ \ind_{\sum}(z.C, x.y.f, p) \colon C[p/z]. \]

Monomorphic theories with annotations have several advantages, first of which is that different terms are distinguished purely syntactically, so for example the lambda terms $\lambda(x \colon A).\,x$ and $\lambda(x \colon B).\,x$ are different identity functions for $A \not\equiv B$.
As another example, the projection functions for dependent pairs $\sum_{x \colon A} B$ are defined separately for each instance of $A$ and $B$, so that we do not just write $\fst(p)$ but
\[ \fst_{A,B}(p) \colonequiv \ind_{\sum}(z.A, x.y.x, p) \colon A, \]
one distinct constant for every $A$ and $B$.

The second advantage of annotating terms $t$ is that the type annotations contain all the information required to reconstruct the derivation of $t$.
Furthermore, such annotations are usually necessary for a theory to have decidable type inference; type theories without annotations are known to suffer from undecidable typability (see for example \cite{dowek93}).

Unfortunately, the monomorphic theory has some distinct \emph{disadvantages} for us when we try to implement it in Pure.
Firstly---and this is true of any implementation of a type theory in general---the annotations create extra notational baggage that must be carried around during the course of a proof, and which is yet irrelevant for most of it.
For example, in a formal proof using path induction, it suffices to know the form of the type family $C$ over which we are inducting at the moment we apply the equality elimination rule.
Once we have done so and instantiated any relevant variables, the proof continues without having to explicitly use the form of $C$ again.

Secondly, recording $C$ as part of the proof term we are constructing creates an additional component of the expression that must be checked for well-formedness further on in the proof, often multiple times.
This second issue is a particular problem for us when we work in Isabelle, due to the fact that in the Pure framework we cannot easily implement automatic type checking or inference, which is normally how the overhead is mitigated.

Thus the monomorphic version of the theory is useful if we want to be able to normalize terms or check for well-formedness of proofs, but it is less convenient if our main purpose is to derive proof terms of given types.
Hence we choose to implement $HoTT_P$ as a \emph{polymorphic} theory without annotations.
Here, for example, the object lambda expression \code{\blm{}x. x} has a different semantic interpretation depending on whether it is considered an inhabitant of
\begin{pure}
(U O) \rarrow (U O)
\end{pure}
or
\begin{pure}
Nat \rarrow Nat\textrm{,}
\end{pure}
and in fact in our theory we will be able to prove for any type \code{A} that
\begin{pure}
\blm{}x. x\Colon\,A \rarrow A\textrm{.}
\end{pure}
We also drop the type family annotations on the dependent eliminators.
Terms in this formulation are sometimes called \emph{Curry-style} terms.
Formulating $HoTT_P$ as a polymorphic theory does not affect the form of the type rules other than to drop the type annotations.

\section{Logical rules} \label{sec:HoTTP-rules}

We formalize the rules in Chapter \ref{ch:1} inside Pure according to the patterns indicated in Section \ref{sec:judgment-forms}.
We present inference rules abstractly, but in the actual implementation these are encoded according to the pattern \eqref{eq:pure-rule-pattern} discussed earlier.

\subsection{Structural rules}

As our theory does not have contexts, there are no equivalents of $\ctx$-\emp, $\ctx$-\ext, or \vbl\ in $HoTT_P$.
As the substitution and weakening rules are admissible (Section A.2.2, \cite{hottbook}), we also do not need to axiomatize them.
The $HoTT$ rules \Equiv-\textsc{refl}, \Equiv-\textsc{sym}, and \Equiv-\textsc{trans} are effectively implemented as the Pure rules \Equiv\I, \Equiv\s, and \Equiv\T.
\textsc{type}-\cng\ is derivable using the Pure rule \textsc{comb}, and \textsc{def}-\cng\ becomes trivial in $HoTT_P$.

\subsection{Universes}

The universe rules are formulated differently from those of $HoTT$ in order to make proofs in Isabelle easier.
They are

\[
\begin{prooftree}[center=false]
    \hypo{\code{i \Lt\ j}}
    \infer1[\code{U\_hierarchy}]{\code{(U i)\Colon\,(U j)}}
\end{prooftree}
\qqquad
\begin{prooftree}[center=false]
    \hypo{\code{A\Colon\,(U i)}}
    \hypo{\code{i \Leq\ j}}
    \infer2[\code{U\_cumulative}]{\code{A\Colon\,(U j)}}
\end{prooftree}
\]
where Isabelle/Pure's automatic meta-level type-checking ensures that \code{i} and \code{j} have Pure type \code{Ord} and \code{A} has Pure type \code{Term}.

It is not hard to show that if we had instead translated the rules $U_i$-\intro\ and $U_i$-\textsc{cumul} directly as
\[
\begin{prooftree}[center=false]
    \infer0[\code{U\_intro}]{\code{(U i)\Colon\,(U (S i))}}
\end{prooftree}
\qqquad
\begin{prooftree}[center=false]
    \hypo{\code{A\Colon\,(U i)}}
    \infer1[\code{U\_cumul}]{\code{A\Colon\,(U (S i))}}
\end{prooftree}
\]
then \code{U\_hierarchy} and \code{U\_cumulative} are derivable in $Pure + \ref{eq:Ord-rules} + \code{U\_intro} + \code{U\_cumul}$,
and \code{U\_intro}, \code{U\_cumul} are derivable in $Pure + \ref{eq:Ord-rules} + \code{U\_hierarchy} + \code{U\_cumulative}$.

\subsection{Type rules}

The basic type rules are implemented in their respective \code{.thy} files in \cite{isabelle-hott}.
\bigskip

\noindent \emph{Dependent product:}

{\small
\[
\begin{prooftree}
    \hypo{\code{A\Colon\,U i}}
    \hypo{\code{\All{}x. x\Colon\,A \Implies\ B x\Colon\,U i}}
    \infer2[\code{Prod\_form}]{\code{(\Prod{x\Colon\,A}{B x})\Colon\,U i}}
\end{prooftree}
\]

\[
\begin{prooftree}[center=false]
    \hypo{\code{A\Colon\,U i}}
    \hypo{\code{\All{}x. x\Colon\,A \Implies\ b x: B x}}
    \infer2[\code{Prod\_intro}]{\code{(\blm{}x. b x)\Colon\,\Prod{x\Colon\,A}{B x}}}
\end{prooftree}
\qqquad
\begin{prooftree}[center=false]
    \hypo{\code{f\Colon\,\Prod{x\Colon\,A}{B x}}}
    \hypo{\code{a\Colon\,A}}
    \infer2[\code{Prod\_elim}]{\code{f`a\Colon\,B a}}
\end{prooftree}
\]

\[
\begin{prooftree}[center=false]
    \hypo{\code{\All{}x. x\Colon\,A \Implies\ b x\Colon\,B x}}
    \hypo{\code{a\Colon\,A}}
    \infer2[\code{Prod\_appl}]{\code{(\blm{}x. b x)`a \Equiv\ b a}}
\end{prooftree}
\qqquad
\begin{prooftree}[center=false]
    \hypo{\code{f\Colon\,\Prod{x\Colon\,A}{B x}}}
    \infer1[\code{Prod\_uniq}]{\code{\blm{}x. (f`x) \Equiv\ f}}
\end{prooftree}
\]
}

Object-logic lambda terms are distinct from the lambda terms of Pure, and are written with boldface \blm\ instead of the regular $\lambda$.
We use the backtick \code{`} to denote object-level application of lambda terms.
\bigskip

\noindent \emph{Dependent sum:}

{\small
\[
\begin{prooftree}
    \hypo{\code{A\Colon\,U i}}
    \hypo{\code{\All{}x. x\Colon\,A \Implies\ B x\Colon\,U i}}
    \infer2[\code{Sum\_form}]{\code{(\Sum{x\Colon\,A}{B x})\Colon\,U i}}
\end{prooftree}
\]

\[
\begin{prooftree}
    \hypo{\code{\All{}x. x\Colon\,A \Implies\ B x\Colon\,U i}}
    \hypo{\code{a\Colon\,A}}
    \hypo{\code{b\Colon\,B a}}
    \infer3[\code{Sum\_intro}]{\code{<a,b>\Colon\,\Sum{x\Colon\,A}{B x}}}
\end{prooftree}
\]

\[
\begin{prooftree}
    \hypo{
    \begin{gathered} \textstyle
        \code{\All{}p. p\Colon\,\Sum{x\Colon\,A}{B x} \Implies\ C p\Colon\,U i} \\ \textstyle
        \code{\All{}x y. \llb{}x\Colon\,A; y\Colon\,B x\rrb\ \Implies\ f x y\Colon\,C <x,y>}
        \qquad
        \code{p\Colon\,\Sum{x\Colon\,A}{B x}}
    \end{gathered}
    }
    \infer1[\code{Sum\_elim}]{\code{ind\sub{\(\sum\)} f p\Colon\, C p}}
\end{prooftree}
\]

\[
\begin{prooftree}
    \hypo{
    \begin{gathered} \textstyle
        \code{\All{}p. p\Colon\,\Sum{x\Colon\,A}{B x} \Implies\ C p\Colon\,U i} \\ \textstyle
        \code{\All{}x y. \llb{}x\Colon\,A; y\Colon\,B x\rrb\ \Implies\ f x y\Colon\,C <x,y>}
        \qquad
        \code{\All{}x. x\Colon\,A \Implies\ B x\Colon\,U i} \\ \textstyle
        \code{a\Colon\,A} \qquad \code{b\Colon\,B a}
    \end{gathered}
    }
    \infer1[\code{Sum\_comp}]{\code{ind\sub{\(\sum\)} f p\Colon\, C p}}
\end{prooftree}
\]
}
\bigskip

\noindent \emph{Equality:}

{\small
\[
\begin{prooftree}[center=false]
    \hypo{\code{A\Colon\,U i}}
    \hypo{\code{a\Colon\,A}}
    \hypo{\code{b\Colon\,A}}
    \infer3[\code{Equal\_form}]{\code{a =\sub{A} b\Colon\,U i}}
\end{prooftree}
\qqquad
\begin{prooftree}[center=false]
    \hypo{\code{a\Colon\,A}}
    \infer1[\code{Equal\_intro}]{\code{(refl a)\Colon\,a =\sub{A} a}}
\end{prooftree}
\]

\[
\begin{prooftree}
    \hypo{
    \begin{gathered} \textstyle
        \code{\All{}x y p. \llb{}x\Colon\,A; y\Colon\,A; p\Colon\,x =\sub{A} y\rrb\ \Implies\ C x y p\Colon\,U i} \\ \textstyle
        \code{\All{}x. x\Colon\,A \Implies\ f x\Colon\,C x x (refl x)} \\ \textstyle
        \code{x\Colon\,A} \qquad \code{y\Colon\,A} \qquad \code{p\Colon\,x =\sub{A} y}
    \end{gathered}
    }
    \infer1[\code{Equal\_elim}]{\code{ind\sub{=} f p\Colon\, C x y p}}
\end{prooftree}
\]

\[
\begin{prooftree}
    \hypo{
    \begin{gathered} \textstyle
        \code{\All{}x y p. \llb{}x\Colon\,A; y\Colon\,A; p\Colon\,x =\sub{A} y\rrb\ \Implies\ C x y p\Colon\,U i} \\ \textstyle
        \code{\All{}x. x\Colon\,A \Implies\ f x\Colon\,C x x (refl x)} \qquad \code{a\Colon\,A}
    \end{gathered}
    }
    \infer1[\code{Equal\_comp}]{\code{ind\sub{=} f (refl a) \Equiv\ f a}}
\end{prooftree}
\]
}
\bigskip

\noindent \emph{Natural numbers:}

{\small
\[
\begin{prooftree}[center=false]
    \infer0[\code{Nat\_form}]{\code{\NN\Colon\,U O}}
\end{prooftree}
\qqquad
\begin{prooftree}[center=false]
    \infer0[\code{Nat\_intro\_0}]{\code{{\texttt 0}\Colon\,\NN}}
\end{prooftree}
\qqquad
\begin{prooftree}[center=false]
    \hypo{\code{n\Colon\,\NN}}
    \infer1[\code{Nat\_intro\_succ}]{\code{succ n\Colon\,\NN}}
\end{prooftree}
\]

\[
\begin{prooftree}
    \hypo{
    \begin{gathered} \textstyle
        \code{\All{}n. n\Colon\,\NN\ \Implies\ C n\Colon\,U i}
        \qquad
        \code{\All{}n c. \llb{}n\Colon\,\NN; c\Colon\,C n\rrb\ \Implies\ f n c\Colon\,C (succ n)} \\ \textstyle
        \code{a\Colon\,C 0} \qquad \code{n\Colon\,\NN}
    \end{gathered}
    }
    \infer1[\code{Nat\_elim}]{\code{ind\sub{\NN} f a n\Colon\,C n}}
\end{prooftree}
\]

\[
\begin{prooftree}
    \hypo{
    \begin{gathered} \textstyle
        \code{\All{}n. n\Colon\,\NN\ \Implies\ C n\Colon\,U i}
        \qquad
        \code{\All{}n c. \llb{}n\Colon\,\NN; c\Colon\,C n\rrb\ \Implies\ f n c\Colon\,C (succ n)}
        \qquad
        \code{a\Colon\,C 0}
    \end{gathered}
    }
    \infer1[\code{Nat\_comp\_0}]{\code{ind\sub{\NN} f a 0 \Equiv\ a}}
\end{prooftree}
\]

\[
\begin{prooftree}
    \hypo{
    \begin{gathered} \textstyle
        \code{\All{}n. n\Colon\,\NN\ \Implies\ C n\Colon\,U i}
        \qquad
        \code{\All{}n c. \llb{}n\Colon\,\NN; c\Colon\,C n\rrb\ \Implies\ f n c\Colon\,C (succ n)} \\ \textstyle
        \code{a\Colon\,C 0} \qquad \code{n\Colon\,\NN}
    \end{gathered}
    }
    \infer1[\code{Nat\_comp\_succ}]{\code{ind\sub{\NN} f a (succ n) \Equiv\ f n (ind\sub{\NN} f a n)}}
\end{prooftree}
\]
}
\bigskip

\noindent \emph{Coproduct:}

{\small
\[
\begin{prooftree}
    \hypo{\code{A\Colon\,U i}}
    \hypo{\code{B\Colon\,U i}}
    \infer2[\code{Coprod\_form}]{\code{A + B\Colon\,U i}}
\end{prooftree}
\]

\[
\begin{prooftree}[center=false]
    \hypo{\code{a\Colon\,A}}
    \hypo{\code{B\Colon\,U i}}
    \infer2[\code{Coprod\_intro\_inl}]{\code{inl a\Colon\,A + B}}
\end{prooftree}
\qqquad
\begin{prooftree}[center=false]
    \hypo{\code{A\Colon\,U i}}
    \hypo{\code{b\Colon\,B}}
    \infer2[\code{Coprod\_intro\_inr}]{\code{inr b\Colon\,A + B}}
\end{prooftree}
\]

\[
\begin{prooftree}
    \hypo{
    \begin{gathered} \textstyle
        \code{\All{}u. u\Colon\,A + B \Implies\ C u\Colon\,U i} \\ \textstyle
        \code{\All{}x. x\Colon\,A \Implies\ c x\Colon\,C (inl x)}
        \qquad
        \code{\All{}y. y\Colon\,B \Implies\ d y\Colon\,C (inr y)} \\ \textstyle
        \code{u\Colon\,A + B}
    \end{gathered}
    }
    \infer1[\code{Coprod\_elim}]{\code{ind\sub{+} c d u\Colon\,C u}}
\end{prooftree}
\]

\[
\begin{prooftree}
    \hypo{
    \begin{gathered} \textstyle
        \code{\All{}u. u\Colon\,A + B \Implies\ C u\Colon\,U i} \\ \textstyle
        \code{\All{}x. x\Colon\,A \Implies\ c x\Colon\,C (inl x)}
        \qquad
        \code{\All{}y. y\Colon\,B \Implies\ d y\Colon\,C (inr y)} \\ \textstyle
        \code{a\Colon\,A}
    \end{gathered}
    }
    \infer1[\code{Coprod\_comp\_inl}]{\code{ind\sub{+} c d (inl a) \Equiv\ c a}}
\end{prooftree}
\]

\[
\begin{prooftree}
    \hypo{
    \begin{gathered} \textstyle
        \code{\All{}u. u\Colon\,A + B \Implies\ C u\Colon\,U i} \\ \textstyle
        \code{\All{}x. x\Colon\,A \Implies\ c x\Colon\,C (inl x)}
        \qquad
        \code{\All{}y. y\Colon\,B \Implies\ d y\Colon\,C (inr y)} \\ \textstyle
        \code{b\Colon\,B}
    \end{gathered}
    }
    \infer1[\code{Coprod\_comp\_inl}]{\code{ind\sub{+} c d (inr b) \Equiv\ d b}}
\end{prooftree}
\]
}
\bigskip

\noindent \emph{Empty and unit types:}

{\small
\[
\begin{prooftree}[center=false]
    \infer0[\code{Empty\_form}]{\code{\(\bm{0}\)\Colon\,U O}}
\end{prooftree}
\qqquad
\begin{prooftree}[center=false]
    \hypo{\code{\All{}x. x\Colon\,\zero\ \Implies\ C x\Colon\,U i}}
    \hypo{\code{z\Colon\,\zero}}
    \infer2[\code{Empty\_elim}]{\code{ind\sub{\zero} z\Colon\,C z}}
\end{prooftree}
\]

\[
\begin{prooftree}[center=false]
    \infer0[\code{Unit\_form}]{\code{\one\Colon\,U O}}
\end{prooftree}
\qqquad
\begin{prooftree}[center=false]
    \infer0[\code{Unit\_intro}]{\code{\(\ast\)\Colon\,\one}}
\end{prooftree}
\]

\[
\begin{prooftree}
    \hypo{\code{\All{}x. x\Colon\,\one\ \Implies\ C x\Colon\,U i}}
    \hypo{\code{c\Colon\,C \(\ast\)}}
    \hypo{\code{a\Colon\,\one}}
    \infer3[\code{Unit\_elim}]{\code{ind\sub{\one} c a\Colon\,C a}}
\end{prooftree}
\]

\[
\begin{prooftree}
    \hypo{\code{\All{}x. x\Colon\,\one\ \Implies\ C x\Colon\,U i}}
    \hypo{\code{c\Colon\,C \(\ast\)}}
    \infer2[\code{Unit\_comp}]{\code{ind\sub{\one} c \(\ast\) \Equiv\ c}}
\end{prooftree}
\]
}

\subsection{Additional rules}

We axiomatize the additional congruence rule
\[
\begin{prooftree}
    \hypo{\code{A\Colon\,U i}}
    \hypo{\code{\All{}x. x\Colon\,A \Implies\ b x \Equiv\ c x}}
    \infer2[\code{Prod\_eq}]{\code{\blm{}x. b x \Equiv\ \blm{}x. c x}}
\end{prooftree}
\]
governing definitional equality of object lambda terms.

\subsection{Univalence}

The formulation of the inference rules of the previous sections is sufficient to encode the definitions and proofs needed to define univalence as presented in Section \ref{sec:HoTT}.
These are implemented in the theory file \code{Univalence.thy} in \cite{isabelle-hott}, and its dependencies \code{EqualProps.thy}, \code{ProdProps.thy}, and \code{Sum.thy}.
\bigskip

\noindent \emph{Univalence:}

\[
\begin{prooftree}
    \infer0[\code{UA}]{\code{univalence\Colon\,isequiv idtoeqv}}
\end{prooftree}
\]

\noindent where \code{isequiv} and \code{idtoeqv} are defined constants encoding $\isequiv$ and $\idtoeqv$ in Pure.

\section{Metatheory}

We discuss the metatheory of $HoTT_P$ as compared with that of $HoTT$.

As discussed earlier, formulating $HoTT_P$ as a polymorphic theory results in differences in the metatheory, in particular regard to normalization, and decidability of type checking and inference in the underlying implementation of Martin-L\"{o}f type theory
\[ ITT_P = HoTT_P - \code{UA}. \]
It is at the moment unclear to the author if $ITT_P$ enjoys the same properties as $ITT$, and what exactly is lost when removing annotations and introducing polymorphism.

There are, however, properties of $ITT$ that we have taken care to preserve.
For example, it is an important consequence of the normalization theorem in $ITT$ that if
\[ \turnstile a \colon A, \]
then
\[ \turnstile A \colon U_i \]
for some $i$ (Theorem 3.12, \cite{ml73}), i.e. well-typedness of terms implies well-formedness of their types.
The type rules of $HoTT_P$ have been deliberately formulated in order to maintain this property, which is the reason some of the rules in Section \ref{sec:HoTTP-rules} have additional premises not present in the corresponding $HoTT$ rules.
That is to say, by a structural induction on the depth of proof terms, we can show the following:

\begin{theorem}
If
\[ \Gamma \turnstile_{ITT_P} \code{a\Colon\,A} \]
then
\[ \Gamma \turnstile_{ITT_P} \code{A\Colon\,U i} \]
for some numeral \code{i}.
\end{theorem}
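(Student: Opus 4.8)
The plan is to induct on the derivation of $\Gamma \turnstile_{ITT_P} \code{a\Colon\,A}$ (equivalently, on the depth of the proof term $a$), splitting into cases according to the final inference rule applied. Several families of cases are immediate. Whenever the last rule is a \emph{formation} rule, or one of the universe rules \code{U\_hierarchy}, \code{U\_cumulative}, the type $A$ is itself a universe \code{U i}; applying \code{U\_hierarchy} together with the $Ord$ axiom \code{\All{}n. n \Lt\ (S n)} instantiated at \code{i} yields \code{(U i)\Colon\,(U (S i))}, so we may take the numeral to be \code{S i}. The base case, in which \code{a\Colon\,A} is one of the Pure-level typing assumptions \code{x\Colon\,A} comprising $\Gamma$, is handled by taking $\Gamma$ to be well-formed, i.e. by assuming each assumed type is itself typeable in some universe; this is exactly the role played by the $\ctx$ judgment in $HoTT$, and I would fold it into the statement as a standing hypothesis on $\Gamma$.

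The \emph{introduction} rules and the \emph{dependent eliminators} form the next block, and here the extra premises deliberately added to the $HoTT_P$ rules do the work. For an introduction rule the type $A$ is the formed type, and I would reconstruct its well-formedness by reapplying the corresponding formation rule to the premises: for instance, in \code{Prod\_intro} we are given \code{A\Colon\,U i} and \code{\All{}x. x\Colon\,A \Implies\ b x\Colon\,B x}; the induction hypothesis applied to the latter (under the assumption \code{x\Colon\,A}) gives \code{B x\Colon\,U j}, and \code{Prod\_form} then delivers \code{(\Prod{x\Colon\,A}{B x})\Colon\,U k} after aligning \code{i} and \code{j} by cumulativity. The introduction rules for the sum, coproduct, and equality types already carry the family premise \code{\All{}x. x\Colon\,A \Implies\ B x\Colon\,U i} (or its analogue), so their formation rules apply directly. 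For every dependent eliminator --- \code{Sum\_elim}, \code{Equal\_elim}, \code{Nat\_elim}, \code{Coprod\_elim}, \code{Empty\_elim}, \code{Unit\_elim} --- the well-formedness of the motive, e.g. \code{\All{}p. p\Colon\,\Sum{x\Colon\,A}{B x} \Implies\ C p\Colon\,U i}, appears as a premise, and the subject of the elimination (\code{p}, \code{n}, \code{u}, etc.) is also a premise; instantiating the quantified parameter and discharging with that premise yields the well-formedness of the result type directly.

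The genuinely awkward case is \code{Prod\_elim}, which concludes \code{f`a\Colon\,B a} from only \code{f\Colon\,\Prod{x\Colon\,A}{B x}} and \code{a\Colon\,A} --- unlike the dependent eliminators, no premise asserts that \code{B a} is a type. Here the induction hypothesis on \code{f} gives \code{(\Prod{x\Colon\,A}{B x})\Colon\,U i}, and I would need an \emph{inversion} (generation) lemma stating that any derivation of \code{(\Prod{x\Colon\,A}{B x})\Colon\,U i} yields \code{\All{}x. x\Colon\,A \Implies\ B x\Colon\,U i}; instantiating this family at \code{a} and discharging with \code{a\Colon\,A} (Pure's $\bigwedge$- and $\Longrightarrow$-elimination, corresponding to the admissible substitution rule of $HoTT$) then gives \code{(B a)\Colon\,U i}. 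I expect this inversion lemma to be the main obstacle: it must itself be proved by induction on the derivation of the formation judgment, and must account both for a preceding \code{U\_cumulative} step and for the cases in which the judgment was obtained through the definitional-equality congruence rules, where the untyped Pure equality of $HoTT_P$ makes the reasoning most delicate. If a clean generation lemma proves elusive in full generality, a fallback is to strengthen the inductive invariant so that well-formedness of the codomain family is threaded through alongside the main statement, thereby avoiding a standalone inversion argument.
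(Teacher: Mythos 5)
Your approach is the same as the paper's: the paper offers no more than the single remark that the result follows ``by a structural induction on the depth of proof terms'', exploiting the premises deliberately added to the $HoTT_P$ rules---exactly the mechanism your formation, introduction, universe, and dependent-eliminator cases run on, so your case analysis is a faithful (and far more explicit) elaboration of the intended argument. Beyond that, your elaboration surfaces two points the paper passes over in silence, and both are genuine. First, the statement does need a standing well-formedness hypothesis on $\Gamma$: as literally stated, with $\Gamma$ an arbitrary collection of typing assumptions, the claim fails (take $\Gamma$ to contain \code{a\Colon\,A} and nothing about \code{A}), so folding a context well-formedness condition into the statement, as you do, is a necessary repair rather than an optional refinement. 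Second, your identification of \code{Prod\_elim} as the problematic case is exactly right: it is the one rule whose conclusion's type \code{B a} is not covered by any premise, and the inversion (generation) principle you require for \code{(\Prod{x\Colon\,A}{B x})\Colon\,U i} is precisely the content of the paper's Conjecture~\ref{conj:wellformedness} (\code{Prod\_wellform1}, \code{Prod\_wellform2})---which the paper states only \emph{after} this theorem and explicitly leaves unproven. So your proposal does not contain a gap that the paper's proof closes; rather, it makes visible a dependency (inversion, or your fallback of a strengthened inductive invariant) that the paper's one-sentence proof conceals, and which would have to be discharged for either version of the argument to be complete.
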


Using a similar technique, one should be able to show that if the judgment
\begin{pure}
\Prod{x\Colon\,A}{B x}\Colon\,U i
\end{pure}
is derivable in $ITT_P$, then so are the judgments
\begin{pure}
A\Colon\,U i
\end{pure}
and
\begin{pure}
\All{}x. x\Colon\,A \Implies B x\Colon\,U i\textrm{,}
\end{pure}
essentially because there is no other way the type \code{\Prod{x\Colon\,A}{B x}} can be formed.

Thus we conjecture

\begin{conjecture} \label{conj:wellformedness}
The inference rules

\[
\begin{prooftree}[center=false]
    \hypo{\code{\Prod{x\Colon\,A}{B x}\Colon\,U i}}
    \infer1[\code{Prod\_wellform1}]{\code{A\Colon\,U i}}
\end{prooftree}
\qqquad
\begin{prooftree}[center=false]
    \hypo{\code{\Prod{x\Colon\,A}{B x}\Colon\,U i}}
    \infer1[\code{Prod\_wellform2}]{\code{\All{}x. x\Colon\,A \Implies\ B x\Colon\,U i}}
\end{prooftree}
\]
are correct in $ITT_P$, as well as similar \emph{well-formedness rules} for the dependent sum, equality, and coproduct types.
\end{conjecture}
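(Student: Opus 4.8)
The plan is to argue directly from the definition of a \emph{correct rule}: for \code{Prod\_wellform1} I must show that whenever \code{\Prod{x\Colon\,A}{B x}\Colon\,U i} is a theorem of $ITT_P$ (i.e.\ derivable from the axioms with no open assumptions), then so is \code{A\Colon\,U i}, and analogously that \code{Prod\_wellform2} yields \code{\All{}x. x\Colon\,A \Implies\ B x\Colon\,U i}. By Proposition \ref{prop:admissible-correct} this is equivalent to admissibility, so the rules add no new theorems. The natural route, following the technique used for the preceding well-typedness theorem, is an \emph{inversion} (or generation) lemma proved by induction on the length of a closed derivation $D$ of the well-formedness judgment, analysing the final rule applied to conclude it.

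The first step is to observe that the conclusion \code{\Prod{x\Colon\,A}{B x}\Colon\,U i} is a bare typing formula, which immediately rules out the Pure logical-constant rules (\Implies\I, \All\I, and so on) as the final step, since those conclude an implication or a universally quantified formula. This narrows the last rule to one of three cases. If $D$ ends with \code{Prod\_form}, its premises are exactly \code{A\Colon\,U i} and \code{\All{}x. x\Colon\,A \Implies\ B x\Colon\,U i}, which are themselves theorems as subderivations, giving both desired conclusions at once. If $D$ ends with \code{U\_cumulative}, its premise is \code{\Prod{x\Colon\,A}{B x}\Colon\,U i'} for some \code{i' \Leq\ i}; the induction hypothesis yields \code{A\Colon\,U i'}, and one further application of \code{U\_cumulative} with the same order hypothesis recovers \code{A\Colon\,U i} (similarly for the dependent premise). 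Note that rules such as \code{Prod\_intro}, \code{Prod\_elim}, or the eliminators conclude judgments about lambda terms or applications, never about a product type sitting in a universe, so they cannot be the final rule.

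The main obstacle is the remaining case, in which the product-type judgment is obtained through \emph{definitional equality}. Because $HoTT_P$ uses the untyped Pure equality and thereby validates subject reduction automatically, a closed derivation could in principle establish \code{\Prod{x\Colon\,A}{B x}\Colon\,U i} by first deriving \code{t\Colon\,U i} for some term \code{t} together with \code{t \Equiv\ \Prod{x\Colon\,A}{B x}}, rather than forming the product directly. To close this case I would need an inversion principle for the reduction relation: if \code{t} is definitionally equal to a product type then \code{t} itself reduces to a product \code{\Prod{x\Colon\,A'}{B' x}} with \code{A \Equiv\ A'} and \code{B x \Equiv\ B' x}, so that the induction hypothesis applied to \code{t}, together with subject reduction and the congruence rule \code{Prod\_eq}, transports well-formedness back to \code{A} and \code{B}. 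Establishing such an inversion principle is essentially a confluence/normalization property for the reduction relation of the \emph{polymorphic, unannotated} theory $ITT_P$, which the thesis has flagged as open; this is exactly why the statement is only a conjecture.

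Consequently, I expect the cleanest complete proof to proceed by first proving a normalization theorem for $ITT_P$ analogous to the one for $ITT$: reduce \code{\Prod{x\Colon\,A}{B x}} to normal form, observe that its head constant is the product former, and read off from the inductive definition of normal terms that the components \code{A} and \code{B x} occur as well-formed normal subterms, each with corresponding normal type \code{U i}. The well-formedness rules for the $\Sigma$, equality, and coproduct types would then follow by the identical argument, since each such type has a unique formation rule whose premises record exactly the required well-formedness data. The crux throughout is the missing normalization (equivalently, the inversion lemma for definitional equality) in the polymorphic setting, and I would expect the bulk of the effort to go into recovering enough of the $ITT$ metatheory to license it.
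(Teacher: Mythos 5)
The statement you are asked to prove is left as a \emph{conjecture} in the thesis: the paper supplies no proof, only the one-line heuristic that the technique of the preceding theorem (structural induction on derivations) ``should'' work, ``essentially because there is no other way the type $\prod_{x : A} B$ can be formed.'' Your proposal elaborates exactly that intended argument---case analysis on the final rule of a closed derivation, with the formation rule and cumulativity cases going through directly---so in approach you are aligned with the paper. Where you go beyond the paper is in pinpointing precisely why the argument cannot currently be completed: the case where the judgment \code{\Prod{x\Colon\,A}{B x}\Colon\,U i} is obtained via the untyped Pure definitional equality (deriving \code{t\Colon\,U i} and \code{t \Equiv\ \Prod{x\Colon\,A}{B x}}) requires an inversion or normalization principle for the polymorphic, unannotated theory $ITT_P$, and the thesis itself admits that it is ``unclear to the author if $ITT_P$ enjoys the same properties as $ITT$.'' Your identification of this obstruction, and your refusal to claim the case can be closed, is exactly the right call---it explains why the statement is a conjecture rather than a theorem, which the paper asserts but never makes explicit. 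In short: your proposal is not a proof, but neither does the paper have one, and your sketch is a more honest and more detailed account of both the intended strategy and the missing ingredient than the paper provides.
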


These type well-formedness rules are axiomatized in their respective theory files, and form a component of one of the automated proof methods implemented as part of $Isabelle/HoTT$.

\chapter{$Isabelle/HoTT$: an invitation} \label{ch:4}

We now come to the end of this thesis, which is really only just the beginning, as we now point the reader in the direction of the work that we have actually been doing all along---the implementation in code of the theory $HoTT$ as a working object logic in the Isabelle system.

There is much in the code that we have not been able to describe here: a collection of semi-automated proof methods that can be used to assist the derivation process, numerous examples illustrating its use to prove nontrivial statements and search for proof terms, formalizations of proofs from \cite{hottbook}, along with descriptive text explaining the details of the implementation in theory and practice.

As hinted at in the closing paragraphs of the previous chapter, there is still much work left to be done, both theoretically and practically.
It is however hoped that the work in this thesis will ultimately be of use in laying the foundations for bringing homotopy type theory to the Isabelle prover.
\bigskip

\noindent Code, issues, and documentation updated at \url{https://github.com/jaycech3n/Isabelle-HoTT}.

% \include{automated-reasoning}

% \include{some-notes}

% \include{further-work}

% \begin{appendices}

% \chapter{Theory file dependencies}

% \end{appendices}

\chapter*{Bibliography}
\addcontentsline{toc}{chapter}{Bibliography}

\printbibliography[heading=none]

\printindex

\end{document}